\providecommand{\version}{full} 
\newcounter{CommentCounter}
\newcommand{\acomment}[2]{\ \\ \fbox{\parbox{\linewidth}{{\sc #1}:\\ #2}}}
\newcommand{\ignore}[1]{}
\newcommand{\myops}[1]{\ensuremath{\mathrm{#1}}}
\newcommand{\mylog}[1]{\ensuremath{\mathrm{#1}}}
\newcommand{\myroot}{\ensuremath{\text{root}}}
\newenvironment{myproofof}[1]
{{\noindent\newline{\bf Proof of #1.}}}
{\qed}
\newcommand{\ahmet}[1]{\acomment{Ahmet}{#1}}
\newcommand{\HCTL}{\mylog{HCTL}\xspace}
\newcommand{\HCTLplus}{\mylog{HCTL^+}\xspace}
\newcommand{\HCTLstar}{\mylog{HCTL^\ast}\xspace}
\newcommand{\Hk}{\mylog{H^kCTL}\xspace}
\newcommand{\Hone}{\mylog{H^1CTL}\xspace}
\newcommand{\HkE}{\mylog{H^kECTL}\xspace}
\newcommand{\Honeplus}{\mylog{H^1CTL^+}\xspace}
\newcommand{\Hkplus}{\mylog{H^kCTL^+}\xspace}
\newcommand{\Hkstar}{\mylog{H^kCTL^*}\xspace}
\newcommand{\Honestar}{\mylog{H^1CTL^\ast}\xspace}
\newcommand{\CTL}{\mylog{CTL}\xspace}
\newcommand{\CTLplus}{\mylog{CTL^+}\xspace}
\newcommand{\CTLstar}{\mylog{CTL^\ast}\xspace}
\newcommand{\ECTL}{\mylog{ECTL}\xspace}
\newcommand{\HoneplusPastFair}{\mylog{H^1PECTL^+}\xspace}
\newcommand{\HonePast}{\mylog{H^1PCTL}\xspace}
\newcommand {\calA}      {{\cal A}\xspace}
\newcommand {\calB}      {{\cal B}\xspace}
\newcommand {\calF}      {{\cal F}\xspace}
\newcommand {\calT}      {{\cal T}\xspace}
\newcommand {\calU}      {{\cal U}\xspace}
\newcommand {\calL}      {{\cal L}\xspace}
\newcommand{\E}{\myops{E}}
\newcommand{\A}{\myops{A}}
\newcommand{\X}{\myops{X}}
\newcommand{\F}{\myops{F}}
\newcommand{\G}{\myops{G}}
\newcommand{\U}{\myops{U}}
\newcommand{\Y}{\myops{Y}}
\newcommand{\Since}{\myops{S}}
\newcommand{\dx}{\ensuremath{{\downarrow}x}}
\newcommand{\ax}[1][]{\ensuremath{@_{x_{#1}}}}
\newcommand{\ar}{\ensuremath{@_\myroot}}
\newcommand{\Finfty}{\ensuremath\overset{\infty}{\F}}
\newcommand{\Ginfty}{\ensuremath\overset{\infty}{\G}}
\newcommand{\rowo}{\ensuremath{\mathit{row}_o}}
\newcommand{\rowe}{\ensuremath{\mathit{row}_e}}
\newcommand{\poso}{\ensuremath{\mathit{pos}_o}}
\newcommand{\pose}{\ensuremath{\mathit{pos}_e}}
\newcommand{\pos}{\ensuremath{\varphi_{\mathit{pos}}}}
\newcommand{\row}{\ensuremath{\varphi_{\mathit{row}}}}
\newcommand{\psicur}{\ensuremath{\psi_{\mathit{full}}}}
\newcommand{\phifirst}{\ensuremath{\varphi_{\mathit{first}}}}
\newcommand{\philast}{\ensuremath{\varphi_{\mathit{last}}}}
\newcommand{\psipos}{\ensuremath{\psi_{\mathit{2pos}}}}
\newcommand{\psirow}{\ensuremath{\psi_{\mathit2row}}}
\newcommand{\size}[1]{\ensuremath{|#1|}}
\newcommand{\bigO}{\ensuremath{\mathcal{O}}}
\newcommand{\Nat}{\ensuremath{\mathbb{N}}}
\newcommand  {\myclass} [1]  {{\ensuremath{\mbox{\bf #1}}}\xspace}
\newcommand     {\Dexptime}  {\myclass{2EXPTIME}}
\newcommand     {\Texptime}  {\myclass{3EXPTIME}}
\newcommand{\dashednode}[2]{\cnode[linestyle=dashed](#1){.2}{#2}}
\newcommand{\wnode}[2]{\cnode(#1){.2}{#2}}
\newcommand{\bnode}[2]{\cnode[fillstyle=solid,fillcolor=black](#1){.2}{#2}}
\newcommand{\gnode}[2]{\cnode[fillstyle=solid,fillcolor=gray](#1){.2}{#2}}
\newcommand{\fullonly}[1]{\ifthenelse{\equal{\version}{full}}{#1}{}}
\newcommand{\mfcsonly}[1]{\ifthenelse{\equal{\version}{mfcs}}{#1}{}}
\newcommand{\mfcsorfull}[2]{\ifthenelse{\equal{\version}{mfcs}}{#1}{#2}}
\newcommand{\ourcondition}[2]{#1: \emph{#2}}
\newcommand{\blow}{{\mathtt{l}}}
\newcommand{\bflip}{{\mathtt{f}}}
\newcommand{\bstay}{{\mathtt{s}}}
\author{Ahmet Kara$^1$ \and Volker Weber$^1$ \and Martin Lange$^2$ \and Thomas Schwentick$^1$ }
\institute{Technische Universit\"{a}t Dortmund \and
  Ludwig-Maximilians-Universit\"{a}t M\"{u}nchen} 
\title{On the Hybrid Extension of \CTL and \CTLplus} 
\begin{document}
\maketitle


\begin{abstract}
  The paper studies the expressivity, relative succinctness and complexity of satisfiability for hybrid
  extensions of the branching-time logics \CTL and \CTLplus by variables. Previous complexity results
  show that only fragments with \emph{one variable} do have elementary complexity. It is shown that
  \Honeplus and \Hone, the hybrid extensions with
  one variable of \CTLplus and \CTL, respectively, are expressively equivalent but \Honeplus is exponentially more succinct than
  \Hone. On the other hand, \HCTLplus, the hybrid extension of \CTL
  with arbitrarily many variables does not capture \CTLstar,  as it
  even cannot express the simple \CTLstar property $\E\G\F p$. The satisfiability problem for
  \Honeplus is complete for triply exponential time, this remains true for quite weak fragments and
  quite strong extensions of the logic.
\end{abstract}

\section{Introduction}

\newcommand{\mysl}[1]{{\blue #1}\xspace}

Reasoning about trees is at the heart of many fields in computer
science
\fullonly{, such as verification and semistructured data}.
 A wealth of sometimes quite different frameworks has been proposed for this
purpose, according to the needs of the respective application. For
reasoning about computation trees as they occur in verification,
branching-time logics like \CTL and tree automata are two such
frameworks. 
\fullonly{In fact, they are closely related \cite{Vardi95}. }
\fullonly{\\}
In some settings, the ability to mark a node in a tree and to refer to
this node turned out to be useful. As neither classical branching-time
logics nor tree automata provide this feature, many different
variations have been considered, including tree automata with pebbles
\cite{EngelfrietH99,CateS08,Weber09},  memoryful \CTLstar
\cite{KupfermanV06}, branching-time logics with forgettable past
\cite{LaroussinieS95,LaroussinieS00}, and logics with the ``freeze''
operator \cite{JurdzinskiL07}%
\fullonly{, the latter ones in the context of data trees \cite{Segoufin06}}.
\mfcsorfull{
It is
an obvious question how this feature can be incorporated into branching-time logics 
{\em without losing their desirable properties} which made them prevailing in
verification \cite{Vardi08}.
}
{
As classical logic naturally provides means to refer to a node, namely
constants and variables, it is
an obvious question how these means can be incorporated into branching-time logics 
without losing their desirable properties which made them prevailing in
verification \cite{Vardi08}.}

 This question leads into the field of hybrid logics, where such extensions
of temporal logics are studied \cite{ArecesC07}. In particular, a hybrid extension of \CTL has been
introduced in \cite{Weber09}.
\fullonly{\\}
 As usual for branching-time logics, formulas of their hybrid extensions are evaluated at nodes of a computation
tree, but it is possible to bind a variable to the current node, to
evaluate formulas relative to the root and to check whether the
current node is bound to a variable. As an example, the \HCTL-formula $\dx \ar \E\F (p \land \E\F x)$ intuitively says
``I  can place $x$ at the current node, jump back to the root, go to a
node where $p$ holds and follow some (downward) path to reach $x$
again´´. Or, equivalently: ``there was a node fulfilling
$p$ in the past of the current node''.    

In this paper we continue the investigation of hybrid extensions of classical branching-time logics
started in \cite{Weber09}.  The main questions considered are (1)~expressivity, (2)~complexity of the
satisfiability problem, and (3)~succinctness. Figure \ref{fig:results} shows our results in their
context. 
\fullonly{
The complexity of the model checking problems will be studied in future work.
}

\definecolor{mycolor1}{rgb}{1.0,1.0,0.6}
\definecolor{mycolor2}{rgb}{0.8,0.9,0.6}
\definecolor{mycolor3}{rgb}{0.9,0.8,1.0}
\definecolor{mycolor4}{rgb}{1.0,0.6,0.6}
\definecolor{mycolor5}{rgb}{0.9,1.0,1.0}
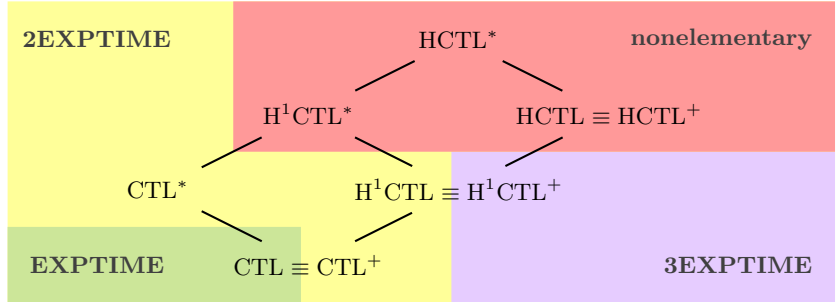
\begin{figure}[t]\label{fig:results}
\begin{center}
\mfcsonly{\psset{unit=7mm}\footnotesize}
\begin{pspicture}(-1,0.3)(10,4)
\psframe[fillstyle=solid,fillcolor=mycolor1,linecolor=mycolor1](-1,0)(10,4)
\psframe[fillstyle=solid,fillcolor=mycolor2,linecolor=mycolor2](-1,0)(2.9,1)
\psframe[fillstyle=solid,fillcolor=mycolor3,linecolor=mycolor3](4.9,0)(10,2)
\psframe[fillstyle=solid,fillcolor=mycolor4,linecolor=mycolor4](2,2)(10,4)

\color{darkgray}
\rput[l](-0.7,0.5){\textbf{EXPTIME}}
\rput[l](-0.8,3.5){\textbf{2EXPTIME}}
\rput[r](9.7,0.5){\textbf{3EXPTIME}}
\rput[r](9.7,3.5){\textbf{nonelementary}}
\color{black}

\rput(3,0.5){\rnode{node1}{$\CTL\equiv\CTLplus$}}

\rput(1,1.5){\rnode{node2}{\CTLstar}}

\rput(5,1.5){\rnode{node3}{$\Hone\equiv\Honeplus$}}

\rput(3,2.5){\rnode{node4}{\Honestar}}

\rput(7,2.5){\rnode{node5}{$\HCTL\equiv\HCTLplus$}}

\rput(5,3.5){\rnode{node6}{\HCTLstar}}

\psset{nodesep=5pt}
\ncline{node1}{node2}
\ncline{node1}{node3}
\ncline{node2}{node4}
\ncline{node3}{node4}
\ncline{node3}{node5}
\ncline{node4}{node6}
\ncline{node5}{node6}

\end{pspicture}
\end{center}
\label{fig:intro}
\caption{Expressivity and complexity of satisfiability for hybrid branching-time logics. The lines indicate strict inclusion, unrelated logics are incomparable.}
\end{figure}

Classical branching-time logics are \CTL (with polynomial time model checking and
exponential time satisfiability) and \CTLstar (with polynomial space model checking and
doubly exponential time satisfiability test). As \CTL is sometimes not
expressive enough\footnote{Some things cannot be expressed at all, some only in a very verbose way.} and \CTLstar is
considered too expensive for some applications, there has been an
intense investigation of intermediate logics. We take up two of them
here: \CTLplus, where a path formula is a Boolean combination of
basic path formulas\footnote{Precise definitions can be found in Section
  \ref{sec:definitions}.} and \ECTL, where fairness properties can be
stated explicitly. 

Whereas (even simpler) hybrid logics
are undecidable  over arbitrary transition systems \cite{ArecesBM00},
their restriction to trees is decidable via a simple translation to
Monadic Second Order logic. However, the complexity of the
satisfiability problem is high even for simple hybrid temporal logics
over the frame of natural numbers: nonelementary
\mfcsorfull{\cite{FranceschetRS03short}}{\cite{FranceschetRS03}}
, even if only two variables are allowed 
\mfcsorfull{\cite{SchwentickW07short,Weber09}}{\cite{SchwentickW07,Weber09}}. The one variable extension of \CTL, $\Hone$, behaves considerably
better, its satisfiability problem can be solved in
\Dexptime~\cite{Weber09}. This is the reason why this paper
concentrates on natural extensions of this complexity-wise relatively modest logic.
\fullonly{\\}
Even \Hone can express properties that are not bisimulation-invariant
(e.g., that a certain configuration can be reached along two distinct
computation paths) and is thus not captured by
\CTLstar. In fact, \cite{Weber09}  shows that \Hone captures and is strictly
stronger than \CTL with past, another extension of \CTL studied in
previous work \cite{KupfermanP95}. One of our main results is that $\Hone$ (and
actually even \HCTLplus) does not capture \ECTL (and therefore not \CTLstar) as it cannot
express simple fairness properties like $\E\G\F p$. 
To this end, we introduce a simple Ehrenfeucht-style game  (in the
spirit of \cite{ArecesBM01}).
We show that existence of a winning strategy for the
second player in the game for a property $P$ implies that $P$ cannot
be expressed in \HCTLplus.

In \cite{Weber09} it is also shown that the
satisfiability problem for \Honestar has nonelementary complexity. We
show here that the huge complexity gap between \Hone and \Honestar  does not yet occur between
\Hone and \Honeplus:  we prove that there is only an
exponential complexity gap between \Hone and \Honeplus, even when
\Honeplus is extended by past modalities and
fairness operators. We pinpoint the exact complexity by proving the problem
complete for \Texptime.

The exponential gap between the complexities for satisfiability of
\Hone and \Honeplus already suggests that \Honeplus might be
exponentially more succinct than \Hone. In fact, we show an exponential
succinctness gap between the two logics by a proof based on the height
of finite models. 
\fullonly{This refines the method of \cite{ipl-ctlplus08}
based on model size.}
It should be noted that an $\bigO(n)!$-succinctness gap between \CTL
and \Hone was established in \cite{Weber09}. We mention that there are
other papers on hybrid logics and hybrid tree logics that do not
study expressiveness or complexity issue, e.g.,
\cite{Goranko00,SattlerV01}.

The paper is organized as follows. Definitions of the logics we use
are in Section
\ref{sec:definitions}. Expressivity results are presented in Section
\ref{sec:expressivity}. The complexity results can be found in Section
\ref{sec:complexity}, the succinctness results in Section
\ref{sec:succinctness}. 
\mfcsonly{Proofs omitted due to space constraints can be
found in the full version of this paper \cite{KaraLSW09}.}


\section{Definitions}\label{sec:definitions}

\subsubsection*{Tree logics.}

\mfcsorfull{
We first  review the definition of \CTL and \CTLstar \cite{ClarkeE81}. 
}
{In this section, we define syntax and semantics of the logics we
use. 
We assume the reader is familiar with the tree logics \CTL and
\CTLstar \cite{ClarkeE81}. 
However, we review the definition of the syntax and semantics of them
next.}
Formulas of \CTLstar are composed from \emph{state formulas} $\varphi$ and \emph{path
  formulas} $\psi$. They have the following abstract syntax.
\begin{align*}                  
\varphi::= &\ p \mid\neg \varphi \mid\varphi \lor \varphi \mid\varphi \land \varphi \mid\E \psi \mid\A \psi  \\
\psi::= &\ \varphi \mid\neg \psi \mid\psi \lor \psi \mid\psi \land \psi \mid\X\psi  \mid\psi \U \psi    
\end{align*}

We use the customary abbreviations $\F\psi$ for $\top \U \psi$ and
$\G\psi$ for $\neg\F\neg\psi$. 
\fullonly{\\}
The semantics of formulas is defined inductively. 
The semantics of path formulas is defined relative to a
tree\footnote{In general, we consider finite and infinite trees and,
  correspondingly, finite and infinite paths in trees. It should
  always be clear from the context whether we restrict attention to
  finite or infinite trees.}
 $\calT$,
a path $\pi$ of $\calT$ and a position $i\ge 0$ of this path. E.g.,
$\calT,\pi,i\models \psi_1\U\psi_2$ if there is some $j\ge i$ such
that $\calT,\pi,j\models \psi_2$ and, for each $l, i\le l<j$,
$\calT,\pi,l\models\psi_1$. 
\fullonly{\\}
The semantics of state formulas is defined relative to a tree $\calT$
and a node $v$ of $\calT$. E.g., $\calT,v\models \E\psi$ if there
is a path $\pi$ in $\calT$, starting from $v$ such that
$\calT,\pi,0\models\psi$. A state formula $\varphi$ holds in a tree
$\calT$ if it holds in its root. Thus, sets of trees can be defined by
\CTLstar state formulas.

 \CTL is a strict sub-logic of \CTLstar. It allows only path
 formulas of the forms $\X\varphi$ and $\varphi_1\U\varphi_2$ where
 $\varphi,\varphi_1,\varphi_2$  are state formulas.  
\CTLplus is the sub-logic of \CTLstar where path formulas are Boolean
combinations of formulas of the forms  $\X\varphi$ and $\varphi_1\U\varphi_2$
and $\varphi,\varphi_1,\varphi_2$ are state formulas.

\ignore{
We use the abbreviation $\Finfty\psi$ for the path formula $\G\F\psi$, expressing that the path
has infinitely many nodes in which $\psi$ holds.
}
\subsubsection*{Hybrid logics.} In hybrid logics, a limited use of
variables is allowed. For a general introduction to hybrid logics we
refer to \cite{ArecesC07}. As mentioned in the introduction,
we concentrate in this paper on hybrid logic formulas with \emph{one}
variable $x$. However, as we also discuss logics with more variables,
we define hybrid logics $\Hkstar$ with $k$ variables. For each $k\ge 1$, the syntax of \Hkstar is defined by extending \CTLstar
with the following rules for state formulas.
\begin{align*}
\varphi::=\dx_i\,\varphi \mid x_i \mid @_{x_i}\,\varphi \mid \myroot \mid @_\myroot\, \varphi
\end{align*}
where $i\in\{1,\ldots,k\}$.
The semantics is now relative to a vector $\vec
u=(u_1,\ldots,u_k)$ of nodes of $\calT$ representing an
assignment $x_i\mapsto u_i$. For a node $v$ and $i\le k$ we write
$\vec u[i/v]$ to denote $(u_1,\ldots,u_{i-1},v,u_{i+1},\ldots,u_k)$.
For a tree $\calT$ a node $v$ and a
vector $\vec u$, the semantics of the new state formulas is defined as
follows. 
\begin{displaymath}
\begin{array}{lcl}
\calT,v,\vec u\models \dx_i\,\varphi &\enspace\mbox{if}\enspace& \calT,v,\vec u[i/v]\models\varphi \\
\calT,v,\vec u\models x_i &\mbox{if}&  v=u_i \\
\calT,v,\vec u\models \ax[i]\, \varphi &\mbox{if}& \calT,u_i,\vec u\models\varphi \\
\calT,v,\vec u\models \myroot &\mbox{if}& v \mbox{ is the root of } \calT \\
\calT,v,\vec u\models @_\myroot\, \varphi &\mbox{if}& \calT,r,\vec u\models\varphi , 
  \mbox{ where } r \mbox{ is the root of } \calT
\end{array}
\end{displaymath}
Similarly, the semantics of path formulas is defined relative to a
tree $\calT$, a path $\pi$ of $\calT$, a position $i \geq 0$ of $\pi$
and a vector $\vec u$. 
\fullonly{E.g.,
$\calT,\pi,i, \vec u \models \X \psi$ if $\calT,\pi,i+1, \vec u \models \psi$. 
}
\fullonly{\\}
Intuitively, to evaluate a formula $ \dx_i\,\varphi$ one puts a pebble
$x_i$ on the current node $v$ and evaluates $\varphi$. During the
evaluation, $x_i$ refers to $v$ (unless it is bound again by another
$\dx_i$-quantifier).   

The hybrid logics \Hkplus and \Hk are obtained by restricting 
\Hkstar in the same fashion as for \CTLplus and \CTL,
respectively. The logic $\HCTL$ is the union of all logics $\Hk$,
likewise \HCTLplus and \HCTLstar.

\ignore{
We also consider the extensions of \CTL and \Hk by formulas of the types
$\E\Finfty\varphi$ and $\E\Ginfty\varphi$. The semantics is defined as follows:
\begin{align*}
\calT,v,\vec u \models \E\Finfty\varphi \enspace &\mbox{if there is a path } \pi
  \mbox{ starting from } v \mbox{ which has infinitely many} \\
  &\mbox{nodes } v' \mbox{ with } \calT,v',\vec u\models \varphi \\
\calT,v,\vec u \models \E\Ginfty\varphi \enspace &\mbox{if there is a path } \pi
  \mbox{ starting from } v \mbox{ such that all but finitely} \\
  &\mbox{many nodes } v' \mbox{ of } \pi \mbox{ fulfill } \calT,v', \vec u\models \varphi 
\end{align*}
The resulting logics are denoted by \ECTL and \HkE, respectively.
}

\mfcsorfull{(Finite) satisfiability of formulas, the notion of a model
and equivalence of two (path and state) formulas $\psi$ and
  $\psi'$ (denoted $\psi \equiv \psi'$) are defined in the obvious
  way.}
{
A state formula $\varphi$ of a hybrid logic is \emph{satisfiable} if there exists a tree $\calT$ with $\calT, r, \vec u \models \varphi$, where $r$ is the root of $\calT$ and $\vec u = (r,\ldots, r)$ is a vector of adequate length. In this case we also say that $\calT$ is a \emph{model} of $\varphi$ (denoted as $\calT \models \varphi$). A state formula $\varphi$ is \emph{finitely satisfiable} if it has a \emph{finite} model. 

Two path formulas $\psi$ and $\psi'$ are \emph{equivalent} (denoted as
$\psi \equiv \psi'$) if for all trees $\calT$, all paths $\pi$ of
$\calT$ and all vectors $\vec u$ of adequate length it holds: $\calT,\pi,0,\vec u \models
\psi$ iff $\calT,\pi,0,\vec u \models \psi'$. Similarly, two state formulas
$\varphi$ and $\varphi'$ are equivalent (denoted as $\varphi \equiv
\varphi'$) if for all trees $\calT$, all nodes $v$ and all vectors $\vec u$ of adequate length it holds:
$\calT,v,\vec u\models\varphi$ iff $\calT,v,\vec u\models\varphi'$.
}
We say that a logic $\calL'$ is at least as expressive as
$\calL$ (denoted as $\calL \leq \calL'$) if for every
$\varphi \in \calL$ there is a $\varphi' \in \calL'$ such that
$\varphi \equiv \varphi'$.
$\calL$ and $\calL'$ have the 
\emph{same expressive power} if $\calL \leq \calL'$ and
$\calL' \leq \calL$. $\calL'$ is \emph{strict more expressive} than $\calL$ 
if $\calL \leq \calL'$ but not $\calL' \leq \calL$.  

\subsubsection*{Size, depth and succinctness.}
For each formula $\varphi$, we define its \emph{size}
$\size{\varphi}$ as usual and its \emph{depth} $d(\varphi)$ as the
nesting depth with respect to path quantifiers.

\fullonly{
It should be remarked that the definition of $d(\varphi)$ is tailored
for the proof of inexpressibility with respect to \Hk. For general
\Hkstar formulas one would count also the nesting of temporal operators.
}

The formal notion of \emph{succinctness} is a bit delicate. We follow the
approach of \cite{GroheS05} and refer to the discussion there.
We say that a logic $\calL$ is \emph{$h$-succinct in} a logic $\calL'$, for a function
$h: \mathbb{N} \rightarrow \mathbb{R}$, if for every formula $\varphi$
in $\calL$ there is an equivalent formula $\varphi'$ in $\calL'$ such
that $\size{\varphi'}\le h(\size{\varphi})$. $\calL$ is
\emph{$\calF$-succinct in} $\calL'$ if  $\calL$ is
$h$-succinct in $\calL'$, for some $h$ in function class $\calF$.  We say that $\calL$ is
\emph{exponentially more succinct} than $\calL'$ if $\calL$ is \emph{not}
$h$-succinct in $\calL'$, for any function $h\in 2^{o(n)}$. 

\subsubsection*{Normal forms.}
\mfcsorfull
{
We say that a \Hk formula is in
\emph{\E-normal form}, if it does not use the path quantifier \A \ at
all. A formula is in \emph{\U-normal form} if it only uses the
combinations $\E\X$, $\E\U$ and $\A\U$ (but not, e.g., $\E\G$ and
$\A\X$).} 
{
It will sometimes be convenient to restrict the set of operators that
have to be considered. To this end, we say that a \Hk formula is in
\emph{\E-normal form}, if it does not use the path quantifier \A \ at
all. A formula is in \emph{\U-normal form} if it only uses the
combinations $\E\X$, $\E\U$ and $\A\U$ (but not, e.g., $\E\G$ and
$\A\X$).} 
\mfcsorfull{
\begin{proposition}
Let $k\ge 1$. For each \Hk formula $\varphi$ there is an equivalent
    \Hk-formula of linear size in $\U$-normal form and an equivalent
    \Hk-formula in $\E$-normal form.
\end{proposition}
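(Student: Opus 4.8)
The plan is to establish the two normal forms by separate syntactic rewritings that preserve equivalence and are applied bottom-up on the structure of the formula. Both rely on the same basic principle: the hybrid state-formula constructs ($\dx$, $x$, $\ax$, $\myroot$, $\ar$) behave exactly like atomic state formulas as far as path quantifiers are concerned, since they are themselves state formulas and thus are evaluated at single nodes. Consequently every rewriting rule that is sound for plain \CTL remains sound for \Hk, and the hybrid operators simply ride along untouched inside the state subformulas. I would state this observation once at the outset so that the inductive steps reduce to the classical \CTL identities.

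For the \U-normal form, I would first push negations inward and eliminate the universal quantifier's non-$\U$ combinations using the standard dualities, but since we are \emph{keeping} both \E and \A here, the goal is only to restrict the temporal operators to \X and \U. The key identities are the \CTL equivalences
\begin{align*}
\E\G\varphi &\equiv \neg\A\F\neg\varphi, & \A\G\varphi &\equiv \neg\E\F\neg\varphi,\\
\E\F\varphi &\equiv \E(\top\U\varphi), & \A\F\varphi &\equiv \A(\top\U\varphi),\\
\A\X\varphi &\equiv \neg\E\X\neg\varphi.
\end{align*}
Applying these bottom-up rewrites every occurrence of a disallowed combination into one of $\E\X$, $\E\U$, $\A\U$ (together with negation, which is freely available among state formulas). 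Each rule replaces one operator by a constant number of symbols, so the size grows by at most a constant factor per operator and the overall blow-up is linear; I would make this size bound explicit to match the claim of linear size.

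For the \E-normal form, the aim is to remove \A altogether. Here I would use the dualities $\A\X\varphi \equiv \neg\E\X\neg\varphi$ and $\A(\varphi_1\U\varphi_2) \equiv \neg\E\G\neg\varphi_2 \wedge \neg\E(\neg\varphi_2 \U (\neg\varphi_1\wedge\neg\varphi_2))$, the second being the standard expansion of the universal until in terms of existential until and existential globally. Working inductively from the innermost subformula outward, each \A-rooted path formula is replaced by a negated \E-formula, so after processing the whole formula no \A remains. Note that this transformation is only claimed to exist (no size guarantee), so the somewhat larger blow-up of the \A\U rule is harmless.

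The main obstacle I expect is not the \CTL identities themselves, which are routine, but verifying carefully that they remain sound in the hybrid setting once a variable assignment $\vec u$ is carried along. The subtle point is that the $\dx_i$ binder changes the assignment, so one must check that the rewriting is applied to genuine \emph{state} subformulas and that no rule accidentally moves a temporal operator across a $\dx_i$ or $\ax$ boundary in a way that would alter which node the variable refers to. Since the rewrites operate purely on the outermost path-quantifier layer and leave the enclosed state formulas (including any hybrid operators and binders) syntactically intact, soundness follows from the plain \CTL argument applied pointwise for each fixed $\vec u$; I would phrase the induction so that $\vec u$ is universally quantified and never inspected by the rewriting, which makes this step clean.
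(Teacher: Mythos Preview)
Your proposal is correct and follows essentially the same route as the paper: for the \U-normal form you invoke the standard \CTL dualities to reduce all combinations to $\E\X$, $\E\U$, $\A\U$ with only a constant-factor size increase, and for the \E-normal form you use the classical expansion $\A(\varphi_1\U\varphi_2)\equiv \neg\E\G\neg\varphi_2 \wedge \neg\E(\neg\varphi_2\U(\neg\varphi_1\wedge\neg\varphi_2))$ together with $\A\X\varphi\equiv\neg\E\X\neg\varphi$, exactly as the paper does. Your added discussion of why the hybrid constructs (being state formulas that carry along the assignment $\vec u$) do not interfere with these rewritings is a welcome bit of rigour that the paper leaves implicit; the paper also remarks explicitly that the recursive application of the $\A\U$ rule may cause exponential blow-up, which is consistent with your observation that no size bound is claimed for the \E-normal form.
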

}
{
\begin{proposition}
Let $k\ge 1$. 
  \begin{enumerate}[(a)]
  \item For each \Hk formula $\varphi$ there is an equivalent
    \Hk-formula in $\U$-normal form and the size of $\psi$ is linear
    in the size of $\varphi$.
  \item For each \Hk formula $\varphi$ there is an equivalent
    \Hk-formula in $\E$-normal form.
  \end{enumerate}
\end{proposition}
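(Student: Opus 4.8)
The plan is to prove both parts by applying standard \CTL rewrite rules for path quantifiers, while observing that the hybrid constructs play no active role. The crucial point is that every equivalence I use is a \emph{congruence}: it holds in every pointed model $(\calT,v,\vec u)$ with the assignment $\vec u$ carried along unchanged, because none of the temporal rewrites rebinds a variable (only $\dx_i$ does, and it sits inside the state subformulas, which are merely copied). Hence each rule may be applied to an arbitrary subformula while preserving equivalence, and I never have to reason about the hybrid operators $\dx_i, x_i, @_{x_i}, \myroot, @_\myroot$ directly; only the outer \CTL skeleton is touched.

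For part (a) I would rewrite the eight \CTL modalities into the three allowed ones by the local equivalences
\begin{align*}
\A\X\varphi &\equiv \neg\E\X\neg\varphi, & \A\G\varphi &\equiv \neg\E(\top\U\neg\varphi),\\
\E\F\varphi &\equiv \E(\top\U\varphi), & \E\G\varphi &\equiv \neg\A(\top\U\neg\varphi),\\
\A\F\varphi &\equiv \A(\top\U\varphi),
\end{align*}
leaving $\E\X$, $\E\U$, $\A\U$ untouched, and applying them bottom-up. Each rule copies its argument $\varphi$ exactly once and adds only a constant number of symbols, so the total size grows linearly; this is what gives the linear bound. I would verify each equivalence directly from the semantics, taking care that it also holds for finite paths ending in a leaf: for instance $\E\G\varphi$ must be witnessed by a finite, entirely $\varphi$-labelled path exactly when $\A\F\neg\varphi$ fails, which the definitions confirm.

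For part (b) I would first invoke part (a) to reduce to the operators $\E\X,\E\U,\A\U$, so that only $\A\U$ remains to be removed. The single rule then needed is
\[
\A(\varphi_1\U\varphi_2)\ \equiv\ \neg\E\G\neg\varphi_2\ \land\ \neg\E\bigl(\neg\varphi_2\U(\neg\varphi_1\land\neg\varphi_2)\bigr),
\]
obtained by pushing the negation through the until via its ``release'' reading $\neg(\varphi_1\U\varphi_2)\equiv \G\neg\varphi_2\lor(\neg\varphi_2\U(\neg\varphi_1\land\neg\varphi_2))$ and then distributing the existential path quantifier over the disjunction, using $\E(\alpha\lor\beta)\equiv\E\alpha\lor\E\beta$. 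The right-hand side mentions only $\E$, and it is still a legal \Hk-formula, since $\E\G$ over a state formula and $\E\U$ over state formulas are \CTL modalities. Applied bottom-up, this removes every occurrence of $\A$.

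The main obstacle is twofold. First, justifying the release identity on \emph{finite} paths: I would argue that if $\varphi_2$ holds everywhere the first disjunct applies, and otherwise, taking $j_0$ the first position carrying $\varphi_2$ and $m<j_0$ the first position carrying $\neg\varphi_1$, the position $m$ witnesses the second disjunct (at $m$ both $\neg\varphi_1$ and $\neg\varphi_2$ hold, and $\neg\varphi_2$ holds strictly before); the converse is a routine case split. Second, the size issue: since the $\A\U$ rule duplicates $\varphi_2$, nested occurrences of $\A\U$ can cause an exponential blow-up, which is precisely why part (b) claims only existence and no size bound, in contrast to the linear bound of part (a).
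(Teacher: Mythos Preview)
Your proposal is correct and takes essentially the same approach as the paper: part~(a) by the standard \CTL rewrites into $\E\X,\E\U,\A\U$ with linear overhead, and part~(b) via the single rule $\A(\varphi_1\U\varphi_2)\equiv \neg\E\G\neg\varphi_2\land\neg\E(\neg\varphi_2\U(\neg\varphi_1\land\neg\varphi_2))$, together with the observation that the recursive duplication of $\varphi_2$ may cause exponential blow-up. Your added remarks (the congruence argument for the hybrid operators, the release identity, and the finite-path check) are welcome extra detail; the paper's proof is terser but otherwise identical.
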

\begin{proof}
  \begin{enumerate}[(a)]
  \item This can be easily shown just as for \CTL. Actually, the
    original definition of  \CTL
    by Emerson and Clarke \cite{ClarkeE81} used only $\E\X$, $\E\U$ and $\A\U$.
  \item This is straightforward as $\A(\psi\U\chi)$ can equivalently expressed as 
        $(\neg\E(\neg\chi)$ $\U(\neg\chi\land\neg\psi)) \land (\neg\E\G\neg\chi)$. 
However, it should be noted
    that the recursive application of this replacement rule may result in a
    formula of exponential size.
  \end{enumerate}
\qed
\end{proof}
}


\section{Expressivity of \HCTL and \HCTLplus}\label{sec:expressivity}
\subsection{The expressive power of \HCTLplus compared to \HCTL}
Syntactically \CTLplus extends \CTL by allowing Boolean combinations of path formulas in the scope of a
path quantifier $\A$ or $\E$. Semantically this gives \CTLplus the ability to fix a path and test its
properties by \emph{several} path formulas. However in 
\mfcsorfull{\cite{EmersonH82short}}{\cite{EmersonH82}}
it is shown that every
\CTLplus-formula can be translated to an equivalent \CTL-formula. The techniques used there are
applicable to the hybrid versions of these logics.
\begin{theorem} \label{sameExpPower}
For every $k \geq 1$, $\Hk$ has the same expressive power as $\Hkplus$.
\end{theorem}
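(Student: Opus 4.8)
The plan is to adapt the classical Emerson--Halpern translation from \CTLplus to \CTL to the hybrid setting. The essential observation is that the hybrid apparatus---the binder $\dx_i$, the tests $x_i$ and $\myroot$, and the jumps $\ax[i]$ and $\ar$---produces only \emph{state} formulas, so it interacts with the path-quantifier structure exactly as ordinary atomic propositions do. Consequently the only place where $\Hkplus$ exceeds $\Hk$ is in the Boolean combinations of the basic path formulas $\X\varphi$ and $\varphi_1\U\varphi_2$ lying under a single quantifier $\E$ or $\A$, and this is precisely the feature that the classical proof eliminates.

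First I would put the given formula in $\E$-normal form using the Proposition from the previous section, so that only the quantifier $\E$ occurs. Then I would argue by structural induction on the formula, where the interesting case is a subformula of the shape $\E\Theta$ with $\Theta$ a Boolean combination of basic path formulas $\X\psi$ and $\psi_1\U\psi_2$ whose arguments are, by the induction hypothesis, already $\Hk$ state formulas. The classical argument rewrites $\E\Theta$: one normalizes $\Theta$ into a disjunction of conjunctions, and then shows that the existence of a single path satisfying a conjunction of ``until'' and ``next'' demands (and their negations) can be decided by a $\CTL$-style formula that guesses, at each step along an arbitrary path, which of the finitely many ``until'' obligations are still pending. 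Because there are only boundedly many subformulas, there are only finitely many possible sets of pending obligations, and one can encode the path search as a nested $\E\U$/$\E\X$ expression over these obligation-sets, each obligation being the $\Hk$ state formula supplied by induction.

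The key step is to verify that this rewriting remains sound when the state-formula arguments contain hybrid constructs. Here I would note that in our semantics the variable assignment $\vec u$ is carried unchanged along a path (the binder acts only at a node, not along transitions), so evaluating $\ax[i]\,\varphi$, $x_i$, $\myroot$ or $\ar$ at a node $v$ with assignment $\vec u$ gives a truth value that depends only on $v$ and $\vec u$, exactly as for a proposition. Thus each hybrid state subformula behaves, with respect to a \emph{fixed} $\vec u$, like a fresh atomic proposition, and the Emerson--Halpern manipulation of path formulas---which only ever tests state subformulas at individual positions---goes through verbatim with $\vec u$ threaded through untouched. A binder $\dx_i$ occurring at the top of $\E\Theta$ is handled by the induction before the path-formula rewriting, since $\calT,v,\vec u\models \dx_i\,\varphi$ reduces to evaluating $\varphi$ at the same node under the updated assignment $\vec u[i/v]$.

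The main obstacle I anticipate is bookkeeping rather than conceptual: making the induction hypothesis strong enough that it speaks about arbitrary assignment vectors $\vec u$, not merely about satisfaction at the root, so that the argument composes through nested binders and jumps. Once the hypothesis is stated as ``$\varphi\in\Hkplus$ is equivalent, for every $\calT$, $v$ and $\vec u$, to a $\Hk$ formula $\varphi'$,'' the hybrid operators pass through the classical construction transparently, and the translation yields an equivalent $\Hk$ formula, establishing $\Hkplus\le\Hk$; the converse inclusion $\Hk\le\Hkplus$ is immediate since $\Hk$ is syntactically a fragment of $\Hkplus$.
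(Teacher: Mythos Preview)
Your proposal is correct and follows essentially the same approach as the paper: both invoke the Emerson--Halpern translation from \CTLplus to \CTL and argue that the hybrid constructs, being state formulas that depend only on the current node and the fixed assignment $\vec u$, behave like atomic propositions for the purposes of those path-formula rewritings. The paper makes exactly your point that under a binder $\dx_i$ the variable can be treated as an ordinary proposition, and your explicit remark that the induction hypothesis must quantify over all assignment vectors $\vec u$ is the right way to make this precise.
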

\mfcsorfull{
\begin{proof}[Sketch]
  For a given $k \geq 1$ it is clear that every $\Hk$-formula is also a $\Hkplus$-formula. It remains
  to show that every $\Hkplus$-formula can be transformed into an equivalent $\Hk$-formula.  In
\mfcsorfull{\cite{EmersonH82short}}{\cite{EmersonH82}}
, rules for the transformation of a \CTLplus formula into an equivalent \CTL formula
  are given. Here, we have to consider the additional case in which a subformula in the scope of the
  \dx-operator is transformed. However, it is not hard to see that the transformation extends to this
  case as any assignment to a variable $x$ can be viewed as a proposition that only holds in one node.
  It should be noted that for a \Hkplus-formula $\varphi$ the whole transformation constructs a
  \Hk-formula of size $2^{\bigO(\size{\varphi} \log \size{\varphi})}$. 
\qed
\end{proof}
} 
{ 
\begin{proof}
  The main difficulty in the translation from \CTLplus to \CTL can be described as follows: In a
  formula like $\E[\F\varphi_1 \wedge \ldots \wedge \F\varphi_n]$ it is not determined in which order the
  formulas $\varphi_1,\ldots,\varphi_n$ hold on the path fixed by the quantifier $\E$. In
  \cite{EmersonH82} this problem is solved by listing \emph{all} possible orders. For instance the
  formula $\varphi=\E[\F\varphi_1\land\F\varphi_2]$ is equivalent to $\varphi' = \E\F(\varphi_1 \wedge
  \E\F\varphi_2) \vee \E\F(\varphi_2 \wedge \E\F\varphi_1)$. The transformation algorithm for \CTLplus
  to \CTL in \cite{EmersonH82} is based on the following equivalences of \CTLplus-formulas\footnote{In
    \cite{EmersonH82} and its journal version \cite{EmersonH85} there are some slight inaccuracies in
    the equivalences. Here we list the corrected ones.}:
{\small
\begin{enumerate}[(1)]
\item $\neg \X \varphi \equiv \X \neg \varphi$
\item $\neg (\varphi \U \varphi') \equiv [(\varphi \wedge \neg \varphi') \U (\neg \varphi \wedge \neg \varphi')] \vee \G \neg \varphi'$
\item $\E(\psi \vee \psi') \equiv \E \psi \vee \E \psi'$
\item $\X\varphi \wedge \X \varphi' \equiv \X(\varphi \wedge \varphi')$
\item $\G\varphi \wedge \G \varphi' \equiv \G(\varphi \wedge \varphi')$
\item 
$\displaystyle\E[\bigwedge_{i=1}^{n} (\varphi_i \U \varphi'_i)\wedge \X \chi \wedge
\G \xi ] \equiv$ 
\mfcsonly{\\\mbox{}\hfill}
$\displaystyle\bigvee_{I \subseteq \{1,...,n\}}[\bigwedge_{i \in I}\varphi'_i \wedge \xi \wedge \bigwedge_{i \notin I} \varphi_i \wedge \E\X(\chi \wedge \E(\bigwedge_{i \notin I} (\varphi_i \U \varphi'_i)\wedge \G \xi))]$\\
\item
$\displaystyle\E[\bigwedge_{i=1}^n (\varphi_i \U \varphi'_i) \wedge \G
\chi] \equiv \bigvee_{\pi \in Permutation(\{1,...,n\})}[\E[(\bigwedge_{i=1}^n \varphi_i \wedge \chi)\U(\varphi'_{\pi(1)} \wedge$\\\mbox{}\hfill
$\displaystyle \E[(\bigwedge_{i \not= \pi(1)}\varphi_i \wedge
\chi)\U(\varphi'_{\pi(2)}\wedge \E[(\bigwedge_{i \not= \pi(1),\pi(2)}\varphi_i \wedge \chi)\U(\varphi'_{\pi(3)}\ldots\U(\varphi'_{\pi(n)}\wedge \E\G \chi)\ldots)])])]]$\\
\end{enumerate} 
}
As explained above in equivalence (7) a disjunction of all possible orders of the formulas $\varphi'_i$ is
formulated.  These equivalences also hold for \Hkplus. It can easily be shown that the occurence of $\myroot$, $\ar$, $\dx$ or
 $\ax$ for a variable $x$ does not destroy any of the equivalences. Furthermore, as
already indicated, if a formula on the right or the left side of one of the equivalences is in the
scope of $\dx$ then the node to which $x$ is assigned is (up to a new \dx) unique which means that $x$
can be treated like a usual proposition. Altogether the translation algorithm for \CTLplus to \CTL
presented in \cite{EmersonH82} also gives a translation algorithm from \Hkplus to \Hk for every $k \geq 1$.
In \cite{EmersonH82} it is noticed that the factorial blowup introduced by equivalence (7) is the worst blowup
in the whole transformation process and since $n! =  2^{\bigO (n \ log\ n)}$ the transformation of a formula $\varphi$ 
results in a formula of length $2^{\bigO (n \ log\ n)}$.
\qed
\end{proof}
}
      
The transformation algorithm in Theorem \ref{sameExpPower} also yields an upper bound for the succinctness between \Honeplus and \Hone.  
\begin{corollary}\label{succUpperBound}
\Honeplus is $2^{\bigO(n \log n)}$-succinct in \Hone.
\end{corollary}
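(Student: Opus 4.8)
The plan is to read the bound directly off the transformation established in the proof of Theorem~\ref{sameExpPower}. That proof, following the Emerson--Halpern translation from \CTLplus to \CTL via the equivalences~(1)--(7), provides an effective procedure that turns any \Hkplus-formula into an equivalent \Hk-formula, and it already records that for an input formula $\varphi$ of size $n$ the resulting \Hk-formula has size $2^{\bigO(n \log n)}$. For the corollary it therefore suffices to instantiate this procedure with $k=1$ and to plug the resulting size bound into the definition of $h$-succinctness.

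Concretely, I would argue as follows. Given any \Honeplus-formula $\varphi$ with $\size{\varphi}=n$, the procedure of Theorem~\ref{sameExpPower} (specialized to $k=1$) returns an equivalent \Hone-formula $\varphi'$, where equivalence holds over all trees, all paths and nodes, and all assignments to the single variable $x$, as fixed in Section~\ref{sec:definitions}. The size analysis carried out in the proof of that theorem gives $\size{\varphi'}\le 2^{\bigO(n \log n)}$. Taking $h(n)=2^{\bigO(n \log n)}$, this is exactly the statement that for every \Honeplus-formula there is an equivalent \Hone-formula whose size is bounded by $h$ of the original size, i.e.\ that \Honeplus is $h$-succinct in \Hone; since $h$ lies in the function class $2^{\bigO(n \log n)}$, the claim follows.

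There is no real obstacle here: the content of the corollary is precisely the quantitative refinement that the proof of Theorem~\ref{sameExpPower} already supplies, so the only thing being invoked is the size accounting of that translation. The dominant contribution to the blowup is the factorial explosion in equivalence~(7), where a disjunction over all permutations of $\{1,\dots,n\}$ is formed; since $n! = 2^{\bigO(n \log n)}$ a single application stays within the claimed class, and the remaining equivalences~(1)--(6), together with the harmless treatment of $\myroot$, $\ar$, $\dx$ and $\ax$ as ordinary propositions, do not push the overall bound beyond $2^{\bigO(n \log n)}$. As this is exactly what Theorem~\ref{sameExpPower} asserts about its construction, the corollary is immediate and needs no further computation.
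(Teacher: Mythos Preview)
Your proposal is correct and is essentially the same as the paper's argument: the paper does not give a separate proof of the corollary at all, but simply observes that the transformation algorithm of Theorem~\ref{sameExpPower} already yields the $2^{\bigO(n\log n)}$ size bound, which is exactly what you do by instantiating with $k=1$ and invoking the definition of $h$-succinctness.
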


\newcommand{\nvec}{\ensuremath{\vec u}}
\newcommand{\com}{\ensuremath{u}}

\subsection{Fairness is not expressible in \HCTLplus.}

In this subsection, we show the following result.

\begin{theorem}\label{theo:fairness}
There is no formula in \HCTLplus which is logically  equivalent to $\E\Finfty p$.
\end{theorem}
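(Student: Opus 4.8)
The plan is to prove inexpressibility by means of an Ehrenfeucht--Fra\"iss\'e style game that is adequate for the path-quantifier depth $d(\cdot)$. By Theorem~\ref{sameExpPower} it suffices to rule out a defining formula in $\HCTL=\bigcup_{k\ge 1}\Hk$, and by the proposition on normal forms I may assume such a formula uses only the combinations $\E\X$, $\E\U$ and $\A\U$, together with the hybrid operators: the binder $\dx_i$, the jumps $\ax[i]$ and $\ar$, and the tests $x_i$ and $\myroot$. Suppose for contradiction that some $\Hk$-formula $\varphi$ is equivalent to $\E\Finfty p$, and set $n:=d(\varphi)$.

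First I would define an $n$-round game on two configurations $(\calT_1,v_1,\vec u_1)$ and $(\calT_2,v_2,\vec u_2)$, each a tree together with a current node and a $k$-vector of marked nodes. A round begins with finitely many \emph{free moves} modelling the hybrid operators, which do \emph{not} decrease the round counter: Spoiler may simultaneously redirect both current nodes to their roots (for $\ar$) or to the nodes marked by some $x_i$ (for $\ax[i]$), and may re-mark the current nodes by some $x_i$ (for $\dx_i$). After the free moves Spoiler spends one round on a \emph{temporal move} for $\E\X$, $\E\U$ or $\A\U$: he chooses one tree, the move type and the witnessing successor (resp.\ the witnessing $\U$-position together with the intermediate path), and Duplicator must reply in the other tree with matching data; the universal operator $\A\U$ is handled dually. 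Duplicator loses as soon as, after some free move, the two current nodes disagree on $p$, on being the root, or on being marked by some $x_i$, so that all atomic, $\myroot$- and $x_i$-tests are preserved throughout.

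The \emph{fundamental lemma} then states that if Duplicator has a winning strategy in the $n$-round game on $(\calT_1,v_1,\vec u_1)$ and $(\calT_2,v_2,\vec u_2)$, these two configurations satisfy exactly the same $\HCTL$ state formulas of depth at most $n$. This is proved by a routine induction on $n$, with the free moves handling the hybrid cases and the Boolean cases being immediate. It therefore remains to exhibit, for this fixed $n$, two trees with $\calT_1\models\E\Finfty p$ and $\calT_2\not\models\E\Finfty p$ on which Duplicator wins the $n$-round game from their roots under the trivial marking: applying the fundamental lemma to $\varphi$ gives $\calT_1\models\varphi\iff\calT_2\models\varphi$, contradicting $\varphi\equiv\E\Finfty p$.

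For the separating pair I intend to use two \emph{combs}. Both have a spine on which $p$ and $\neg p$ alternate, and from every spine node hangs an infinite all-$\neg p$ ``tooth''. In $\calT_1$ the alternation continues forever, yielding a single fair path and hence $\calT_1\models\E\Finfty p$; in $\calT_2$ the spine alternates only up to a position $N$ chosen large compared to $n$ and is all-$\neg p$ afterwards, so every infinite path (spine or tooth) carries only finitely many $p$ and $\calT_2\not\models\E\Finfty p$. The teeth are essential: they make an all-$\neg p$ future reachable everywhere, so cheap detectors such as $\E\G p$, $\A\G\E\F p$ and $\A\G\A\F p$ already fail in \emph{both} combs, while the genuinely infinitary $\E\Finfty p$ is unavailable in $\HCTL$; the remaining depth-$n$ formulas can at best count a bounded number of occurrences of $p$, which the choice of $N$ defeats. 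Duplicator's strategy mirrors Spoiler along the common alternating prefix and on corresponding teeth, the only asymmetry being the post-$N$ part of the $\calT_2$-spine, which is itself an all-$\neg p$ subtree and is matched by a tooth of $\calT_1$. The main obstacle, and the part requiring real care, is to verify that Duplicator maintains this for all $n$ rounds against the \emph{combined} power of long $\U$-jumps and the hybrid free moves: one $\U$-move may leap far along the spine, and an $\ar$- or $\ax[i]$-move may teleport the play back to the root or to a previously marked node. I would therefore design the combs to be homogeneous enough that every node Spoiler can mark or jump to has, for the number of rounds still available, an equivalent partner in the other tree, so that redirecting the play never uncovers the infinitary discrepancy between the two combs.
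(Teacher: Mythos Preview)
Your game framework and the reduction to \HCTL via Theorem~\ref{sameExpPower} are fine, and they are essentially what the paper does. The genuine gap is in your choice of separating structures: the two combs are already distinguished by the depth-$2$ \CTL formula $\E\G\,\E\F p$, so Duplicator cannot win even two rounds, no matter how large $N$ is.

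Indeed, in $\calT_1$ the spine itself witnesses $\E\G\,\E\F p$: every spine node sees a future $p$-node further along the spine. In $\calT_2$, however, every infinite path from the root eventually enters either a tooth or the post-$N$ part of the spine; in both cases all remaining nodes have an all-$\neg p$ future, so $\E\F p$ fails there and hence $\calT_2\not\models\E\G\,\E\F p$. You checked $\E\G p$, $\A\G\E\F p$ and $\A\G\A\F p$, but overlooked $\E\G\E\F p$; the teeth kill the $\A$-variants but not this one. This is exactly why the paper's ``non-fair'' tree is not a comb but the Emerson--Halpern structure $\calT_i=T(\calA_i)$: there a white node of height $h$ has black children of \emph{all} heights $0,\ldots,h-1$, so from any node of large height one can reach a $p$-node below which again a subtree of large height sits. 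Consequently $\E\G\E\F p$ (and its bounded-depth iterations) holds at the roots of $\calT_i$ for large $i$ just as it does in the fair tree $T(\calB_k)$, and the real work is to choose the height parameter $N_k$ so that Duplicator survives $k$ rounds (Lemma~\ref{lem:dupwinsTU}). Your strategy of ``mirror along the common prefix and match the post-$N$ spine with a tooth'' collapses precisely because a tooth and the alternating spine differ already with respect to $\E\F p$; you need branching rich enough that no bounded-depth probe can tell apart ``$p$ recurs forever'' from ``$p$ recurs many more times''.
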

Here, $\calT,v,\vec u \models \E\Finfty\varphi$ if there is a path $\pi$
starting from $v$ that has infinitely many nodes $v'$  with
$\calT,v',\vec u\models \varphi$.  
As an immediate consequence of this theorem, \HCTLplus  does not capture \CTLstar. 

In order to prove Theorem \ref{theo:fairness}, we  define an Ehrenfeucht-style game that corresponds to the expressive power of
\HCTL. A game for a different hybrid
logic was studied in \cite{ArecesBM01}. We show that if a set $L$ of trees can
be characterized by a $\HCTL$-formula, the spoiler has a winning strategy in the
game for $L$. We expect the converse to be true as well but do not
attempt to prove it as it is not needed for our purposes here.

Let $L$ be a set of (finite or infinite) trees. The \emph{\HCTL-game}
for $L$ is played by two players, the \emph{spoiler} and the
\emph{duplicator}. First, the spoiler picks a number $k$ which will be
the number of rounds in the core game. Afterwards, the duplicator chooses
two trees, $\calT\in L$ and $\calT'\not\in L$. The goal of the spoiler
is to make use of the difference between $\calT$ and $\calT'$ in the
core game. 

The \emph{core game} consists of $k$ rounds of moves, where
in each round $i$ a node from $\calT$ and a node from $\calT'$
are selected according to the following rules.
The spoiler can choose whether she starts her move in $\calT$ or in
$\calT'$ and whether she plays a node move or a path
  move.

In a \emph{node move} she simply picks a node from $\calT$ (or $\calT'$) and
the duplicator picks a node in the other tree. We refer to these two
nodes by $a_i$ (in $\calT$)  and $a'_i$ (in $\calT'$), respectively, where $i$ is the number of the round. 

In a \emph{path move}, the spoiler first chooses one of the trees. Let
us assume she chooses $\calT$, the case of $\calT'$ is completely
analogous. She picks an already selected node $a_j$ of $\calT$, for
some $j<i$ and a path $\pi$ starting in $a_j$. However, a node $a_j$
can only be selected if there is no other node $a_l$, $l<i$ below $a_j$.
The duplicator answers by
selecting a path $\pi'$ from $a'_j$. Then, the spoiler selects some node
$a'_i$ from $\pi'$ and the duplicator selects a node $a_i$ from $\pi$.

The duplicator wins the game if at the end the following conditions
hold, for every $i,j\le k$:
\begin{itemize}
\item $a_i$ is the root iff $a'_i$ is the root;
\item $a_i=a_j$ iff $a'_i=a'_j$;
\item for every proposition $p$, $p$ holds in $a_i$ iff it holds in $a'_i$;
\item there is a (downward) path from $a_i$ to $a_j$ iff there is a path from
  $a'_i$ to $a'_j$;
\item $a_j$ is a child of $a_i$ iff $a'_j$ is a child of $a'_i$.
\end{itemize}

\begin{theorem}\label{theo:ef}
  If a set $L$ of (finite and infinite) trees can be characterized by a \HCTL-formula, the
  spoiler has a winning strategy on the \HCTL-game for $L$.
\end{theorem}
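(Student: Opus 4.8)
The plan is to give, for an arbitrary \HCTL-formula $\varphi$ that characterises $L$, an explicit winning strategy for the spoiler. As her opening move the spoiler commits to a round number $k$ depending only on $\varphi$; a safe choice is $k = 2\,d(\varphi) + c$, where $d(\varphi)$ is the path-quantifier depth and $c$ is a constant paying for selecting the root. After the duplicator has produced $\calT\in L$ and $\calT'\notin L$ we have $\calT, r, (r,\dots,r)\models\varphi$ but $\calT', r', (r',\dots,r')\not\models\varphi$. The whole strategy is driven by the following invariant, which the spoiler maintains after every round: there is a subformula $\psi$ of $\varphi$ with $2\,d(\psi)$ at most the number of rounds still to be played, a \emph{current} pair among the already-selected nodes together with the variable assignment induced by the current correspondence, such that $\psi$ is satisfied at the current node of one tree but not of the other; and all selected pairs so far respect exactly the relations that the winning conditions test (propositions, being the root, equality, existence of a downward path, and the child relation). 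Initially the invariant holds for $\psi=\varphi$ at the corresponding roots.

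I would then show that the spoiler can preserve the invariant by induction on the structure of $\psi$. If $\psi$ is atomic — a proposition, a variable test $x_i$, or $\myroot$ — then a distinguishing $\psi$ \emph{is} a violated winning condition, so the spoiler has already won and may play the remaining rounds arbitrarily. Boolean connectives are free: the spoiler passes to the conjunct or disjunct that still distinguishes. The hybrid operators cost no path-depth and are absorbed into the correspondence: $\down x_i$ rebinds $x_i$ to the current pair, while $\ax[i]\,\chi$ and $\ar\,\chi$ merely redirect the current node to the already-selected pebble for $x_i$, respectively to the root pair (installed by one opening node move). The equality and root winning conditions guarantee that these redirections preserve the correspondence, so no round is consumed and $d$ does not increase.

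The rounds are spent on the path quantifiers. For $\E\X\chi$ the spoiler grabs the witnessing successor by a node move to a child, the child winning condition forcing the duplicator onto a successor of its own current node, after which she recurses on $\chi$. The heart of the argument is $\E(\psi_1\U\psi_2)$ (and, dually and symmetrically, $\A(\psi_1\U\psi_2)$). Assume it holds in $\calT$ and fails in $\calT'$. The spoiler leads in $\calT$ and plays a \emph{finite} witnessing path $\pi$ ending exactly at a node where $\psi_2$ holds, with $\psi_1$ holding at every earlier node; the duplicator must answer with a path $\pi'$ from the corresponding node, along which the until necessarily fails. Inspecting the first position of $\pi'$ at which $\psi_2$ holds shows that either $\pi'$ contains a node at which \emph{both} $\psi_1$ and $\psi_2$ fail, or $\psi_2$ fails along all of $\pi'$. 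In the first case the spoiler selects that node on $\pi'$: every node the duplicator can answer on $\pi$ either satisfies $\psi_1$ or is the terminal $\psi_2$-node, so the new pair distinguishes on $\psi_1$ or on $\psi_2$, each of strictly smaller depth. In the second case she uses a node move on the $\psi_2$-witness of $\pi$, exploiting that the downward-path winning condition confines the duplicator's answer to a $\psi_2$-free region; should the duplicator nevertheless reach a $\psi_2$-node, the ancestor and child conditions pin the unique approach path, so that one further move on the first $\psi_1$-failure distinguishes on $\psi_1$.

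The main obstacle is precisely this until step. Unlike the other operators it forces the spoiler to defeat two degrees of freedom of the duplicator at once — the choice of the response path and the choice of the node on it — with a selection she must commit \emph{before} the duplicator answers, and the two ways in which an until can fail call for different spoiler moves. I expect the delicate points to be (i) verifying that the tree-order winning conditions genuinely confine the duplicator to the witnessing path, so that the promised lower-depth distinction is unavoidable for \emph{every} answer; and (ii) respecting the side condition that a path move may start only from a selected node with no selected node below it, which constrains the order in which $\psi_1$ and $\psi_2$ may be attacked and is exactly what forces the factor-two slack in the choice of $k$. Everything else is routine bookkeeping of the correspondence, and the base case always bottoms out in a violated winning condition, which completes the induction and hence the construction of the spoiler's winning strategy.
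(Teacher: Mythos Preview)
Your overall architecture---structural induction maintaining the invariant that some subformula of strictly smaller depth distinguishes at the current pair---is exactly the paper's approach, and your treatment of the atomic, Boolean, $\down x_i$, $@_{x_i}$ and $\E\X$ cases is correct and matches the paper. The substantive divergence, and the place where your argument does not go through as written, is the handling of $\E(\psi_1\U\psi_2)$.

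You open with a path move on the finite witnessing path $\pi$ and then branch on the shape of the duplicator's reply $\pi'$. But once a path move has been initiated the spoiler must complete it by selecting a node on the \emph{duplicator's} path $\pi'$, after which the duplicator selects on $\pi$; she cannot abandon it for a node move. In your second case ($\psi_2$ fails everywhere on $\pi'$, hence by your own dichotomy $\psi_1$ holds everywhere on $\pi'$) any node the spoiler picks on $\pi'$ satisfies $\psi_1$ and fails $\psi_2$, and the duplicator can answer with any non-terminal node of $\pi$, which has the identical $(\psi_1,\psi_2)$-profile. So the path move produces no distinction, and the node move on the $\psi_2$-witness that you then invoke is an \emph{additional} round on top of the wasted one---which also breaks your $2d(\varphi)+c$ budget. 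The paper simply skips the path move for $\E\U$: the spoiler makes a \emph{node} move on the $\psi_2$-witness $w$; if the duplicator's $w'$ fails $\psi_2$ she recurses on $\psi_2$, and otherwise the downward-path winning condition pins a second node move to the unique $v'$--$w'$ path, which must contain a $\psi_1$-failure. This yields $k_{\E\U}=\max(k_{\psi_1}+2,k_{\psi_2}+1)$ and is the true source of the factor two.

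Conversely, you dismiss $\A(\psi_1\U\psi_2)$ as ``dual and symmetric'', but it is neither, and this is where the path move is genuinely needed. The paper rewrites $\A(\psi_1\U\psi_2)$ as $\neg\E\G\neg\psi_2 \land \neg\E((\neg\psi_2)\U(\neg\psi_1\land\neg\psi_2))$. The second conjunct reduces to the $\E\U$ case just discussed, but for $\E\G\neg\psi_2$ (which holds in $\calT'$) the spoiler must exhibit a possibly \emph{infinite} $\neg\psi_2$-path in $\calT'$; whatever path the duplicator offers in $\calT$ must, since $\A\U$ holds there, contain a $\psi_2$-node, and the spoiler selects it. It is also here, not in $\E\U$, that the side condition on path moves bites: the paper first spends a node move to descend below all previously selected nodes before playing the path, which is exactly the delicate bookkeeping you flag in point~(ii).
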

\mfcsorfull{
The proof of Thm.~\ref{theo:ef} is by induction on the structure of
the \HCTL-formula \cite{KaraLSW09}.
}
{
\begin{proof}
   Let $L$ be a set of trees and $\varphi\in\HCTL$ such that, for every
  tree $\calT$, $\calT$ is in $L$ if and only if $\calT\models
  \varphi$. We show that the spoiler has a winning
  strategy with $k_\varphi$ rounds in the game for $L$, where
  $k_\varphi$ only depends on $\varphi$.

The proof is by induction on the structure of $\varphi$. 
As usual, we have to prove a slightly stronger statement for the
induction step. We show that, for every $\HCTL$-formula $\varphi$ with
variables from $X_l:=\{x_1,\ldots,x_l\}$, there
is $k_\varphi$ such that, for trees $\calT, \calT'$, nodes $v$ from
$\calT$ and $v'$ from $\calT'$ and node vectors  $\nvec$
and $\nvec'$, the spoiler has a winning strategy in the
$k_\varphi$-round core game on $(\calT,v,\nvec)$ and $(\calT',v',\nvec')$ if
$\calT,v,\nvec\models\varphi$ and $\calT',v',\nvec'\not\models\varphi$. 

Thus, the proof uses a slightly extended game, in which the duplicator does not only
choose $\calT$ and $\calT'$ but
also nodes $v,v'$ and node vectors $\nvec,\nvec'$. The game starts in
a situation where
nodes $a_0:=v$ and  $a_i:=\com_i$, for $1\le i\le l$ are
already selected in $\calT$ and correspondingly in $\calT'$. 
In the remaining $k$ rounds $a_{l+1},\ldots,a_{l+k}$ and
$a'_{l+1},\ldots,a'_{l+k}$ are selected and the winning condition
applies to $a_0,\ldots,a_{l+k}$ and $a'_0,\ldots,a'_{l+k}$.

It is easy to see that the theorem follows from this extended statement.

If $\varphi$
is atomic, it can only test propositional properties of $v$ and
hence the spoiler wins the game by choosing $k_\varphi=0$. 

The rest of the proof is by case distinction on the outermost
operator or quantifier of $\varphi$.
\begin{itemize}
\item If $\varphi=\neg \psi$, the spoiler has a winning strategy for $\psi$ by
  the hypothesis. She can simply follow that winning strategy whilst 
  switching the roles of $\calT$ and $\calT'$. In particular, $k_\varphi=k_\psi$.
\item If $\varphi=\psi\lor\chi$ the spoiler chooses
  $k_\varphi=\max(k_\psi,k_\chi)$. In the core game she either follows
  the winning strategy for $\psi$ or for $\chi$ depending on whether
  $\calT,v,\nvec\models \psi$ or  $\calT,v,\nvec\models \chi$.
\item The case that $\varphi=\psi\land\chi$ is analogous to the
  previous one.
\item If $\varphi=\E\X\psi$ the spoiler chooses
  $k_\varphi=k_\psi+1$. Let $\calT,v,\nvec,\calT',v',\nvec'$ be selected by
  the duplicator. As $\calT,v,\nvec\models\varphi$ there is a child $w$ of
  $v$ such that $\calT,w,\nvec\models\psi$. On the other hand, there is no
  child $w'$ of $v'$ with $\calT',w',\nvec'\models\psi$. Thus, the spoiler
  can select $a_{l+1}:=w$ and win the remaining $k_\psi$ rounds no matter which child of $v'$
  is chosen by the duplicator. She simply mimics the strategy of the
  game for $\psi$ on $(\calT,a_{l+1},\nvec)$ and $(\calT',a'_{l+1},\nvec')$.
If the duplicator does not choose a
  child of $v'$ the spoiler wins instantly.
\item As $\A\X\psi\equiv\neg\E\X\neg\psi$, the case of
  $\varphi=\A\X\psi$ is already covered by the previous cases.
\item  If $\varphi=\E(\psi\U\chi)$ the spoiler chooses
  $k_\varphi=\max(k_\psi+2,k_\chi+1)$. Let $\calT,v,\nvec,\calT',v',\nvec'$ be selected by
  the duplicator. As $\calT,v,\nvec\models \E(\psi\U\chi)$, there is some
  node $w$ below $v$ such that  $\calT,w,\nvec\models\chi$ and, for each
  node $z$ on the path from $v$ to $w$ it holds
  $\calT,z,\nvec\models\psi$. 
The spoiler does a node move in $\calT$ and
  selects $a_{l+1}=w$. 

Let $w'$ be the node selected by the
  duplicator. As $\calT',v',\nvec'\not\models\varphi$, we can conclude
  that $\calT',w',\nvec'\not\models\chi$ or, for some $z'$ on the path
  from $v'$ to $w'$, $\calT',z',\nvec'\not\models\psi$.  In the former
  case, the game continues, in the latter case she
  selects $v'_{l+2}=z'$. In either case, she has a winning strategy
  for the remaining $\max(k_\psi,k_\chi)$ rounds by induction.
\item If $\varphi=\A(\psi\U\chi)$, $\varphi$ is equivalent to $\varphi_1\land\varphi_2$ where
  $\varphi_1=\neg\E\G\neg\chi$ and $\varphi_2=\neg\E((\neg
  \chi)\U(\neg\psi\land\neg\chi))$.  The
  spoiler chooses $k_\varphi= \max(k_\psi,k_\chi)+2$.

Let $\calT,v,\nvec,\calT',v',\nvec'$ be selected by
  the duplicator. If $\calT',v',\nvec'\not\models\varphi_2$ the winning
  strategy of the spoiler is already given by the previous
  cases. Otherwise, $\calT',v',\nvec'\models\E\G\neg\chi$. Let $\rho'$ be
  a path starting from $v'$ such that $\calT',\rho',0\models
  \G\neg\chi$. Let $w'$ be the first node on this path for which none
of the nodes  $a'_1,\ldots,a_l$ is below $w'$. The spoiler selects
$w'$ in a node move. Let $w$ be the node selected by the
duplicator. If there is a node $z$ on the path from $v$ to $w$ such
that $\calT,z,\nvec\models\chi$, the spoiler chooses $z$ in a subsequent
node move and wins by induction as there is no corresponding node
between $v'$ and $w'$. Otherwise she makes a path move in which she
first selects the sub-path $\pi'$ of $\rho'$ starting in $w'$. 
 Let  $\pi$ be a path in $\calT$ starting from $w$ as selected by the duplicator. As
  $\calT,w,\nvec\models\varphi_1$, there is a node $z$ on $\pi$ such that
  $\calT,z,\nvec\models\chi$. The spoiler selects this node as $a_{i+1}$
    and, as the duplicator cannot find such a node on $\pi'$ wins by induction. 
  \item If $\varphi=\dx_i\psi$, for some $i$, the spoiler simply chooses
    $k_\varphi=k_\psi$. After the selection of
    $\calT,v,\nvec,\calT',v',\nvec'$ by the duplicator she mimics the game for
    $\psi$ on  the structures  $(\calT,v,\nvec[i/v])$ and $(\calT',v',\nvec'[i/v'])$. 
\item If $\varphi=\ax[i]\psi$, for some $i$, the spoiler mimics the game on
 $(\calT,\com_i,\nvec)$ and $(\calT',\com'_i,\nvec')$.
\end{itemize}
\qed
\end{proof}
} 

\ignore{
For a structure $\calA$ and nodes $a,a'$ in $\calA$, the
\emph{$k$-type} $\tau_k(\calA,a_x,a_y)$ is the set of formulas $\varphi$
of nesting depth at most $k$, for which
$\calA,a_x,a_y\models\varphi$. The following proposition is standard but
will play an important role in our proofs.
\begin{proposition}
  For every $k$, there are only finitely many $k$-types.
\end{proposition}
The proof is by induction on $k$. It uses that (1) from
finitely many formulas only finitely many formulas can be applied by
applying one formation rule of \Hone and (2) that from a finite set of
formulas only finitely many non-equivalent formulas can be obtained by
Boolean combination.
}

Now we turn to the proof of Thm.~\ref{theo:fairness}. It makes use of the 
following lemma which is easy to prove using standard techniques (see,
e.g., \cite{Libkin04}). The lemma will be used to show that the
duplicator has certain move options on paths starting from the
root. The parameter $S_k$ given by the lemma will be used below for the
construction of the structures $\calB_k$. 

For a string $s \in \Sigma^*$ and a symbol $a \in \Sigma$ let $|s|$
denote the length of $s$ and $|s|_a$ the number of occurrences of $a$ in $s$.

\begin{lemma}\label{lem:zeroone}
  For each $k\ge 0$ there is a number $S_k\ge 0$ such that, for each
  $s \in \{0,1\}^*$ there is an $s' \in \{0,1\}^*$ such that $|s'| \le S_k$ and $s\equiv_k s'$.
\end{lemma}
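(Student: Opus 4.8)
The plan is to prove Lemma~\ref{lem:zeroone} via a standard Ehrenfeucht--Fra\"iss\'e argument for first-order logic (or equivalently monadic logic) over finite strings, exploiting the fact that the $k$-round EF relation $\equiv_k$ on strings has only finitely many equivalence classes and that each class contains a representative of bounded length. Concretely, I would view a string $s \in \{0,1\}^*$ as a finite labelled linear order, so that $s \equiv_k s'$ abbreviates that the duplicator wins the $k$-round EF game on $s$ and $s'$; this is the notion already in play in the game-based proof of Theorem~\ref{theo:fairness}, and the parameter $S_k$ is exactly the bound the construction of the $\calB_k$ will consume.

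First I would recall the classical fact that for strings over a fixed finite alphabet, the relation $\equiv_k$ is an equivalence relation of finite index; this is the Ehrenfeucht--Fra\"iss\'e theorem together with the observation that for a bounded number of rounds only finitely many $k$-types are realizable (see \cite{Libkin04}). Call the number of $\equiv_k$-classes $N_k$. The core step is then to show that each class has a representative of bounded length. Here I would use the well-known \emph{threshold/pumping} phenomenon: in the $k$-round game the duplicator's ability to respond depends only on relative positions and on block lengths counted \emph{up to a threshold} of roughly $2^k$. Intuitively, a maximal block of identical symbols of length exceeding the threshold is game-equivalent to one of exactly threshold length, because the spoiler can only distinguish gaps of size up to $2^k - 1$ within $k$ moves. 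Iterating this shortening over all blocks of $s$ yields an $s'$ with $s' \equiv_k s$ whose length is bounded by a function of $k$ alone.

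To extract the explicit bound $S_k$, I would argue that after the shortening procedure every maximal block has length at most the threshold $2^k$, and that the number of blocks itself may be capped: since there are only $N_k$ many $k$-types, any string realizing more than $N_k$ distinct ``prefix-types'' can be collapsed, so it suffices to take $S_k$ to be, say, $N_k \cdot 2^k$ or any convenient closed form dominating it. The precise constant is immaterial for the paper's purposes, so I would simply set $S_k$ to be the maximum length over one representative per $\equiv_k$-class, which is finite precisely because there are finitely many classes; this gives the cleanest self-contained statement of existence without committing to an explicit formula.

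The main obstacle, and the only genuinely technical point, is justifying the block-shortening step rigorously rather than by appeal to intuition: one must verify that replacing a long block by a truncated block preserves the winning strategy of the duplicator across all $k$ rounds, which requires a careful invariant tracking, for each already-played pair of positions, both their order and their distances-up-to-threshold from block boundaries and from one another. The standard way to discharge this is to prove a composition lemma stating that $\equiv_k$ is a congruence for concatenation and that a block $a^m$ with $m \ge 2^k$ satisfies $a^m \equiv_k a^{2^k}$, after which monotonicity of $\equiv_k$ under concatenation finishes the argument. Since the lemma is stated as ``easy to prove using standard techniques,'' I would cite \cite{Libkin04} for these composition and threshold properties rather than reprove them, and present the shortening as a direct corollary.
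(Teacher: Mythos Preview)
Your proposal is correct and matches the paper's treatment: the paper does not give a proof of this lemma at all but simply states that it ``is easy to prove using standard techniques (see, e.g., \cite{Libkin04}).'' Your finite-index argument (pick one representative per $\equiv_k$-class and let $S_k$ be the maximum of their lengths) is already a complete proof, and your more constructive block-shortening/composition sketch is the standard way to extract an explicit bound, also covered in \cite{Libkin04}.
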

Here, $\equiv_k$ is equivalence with respect to the $k$-round
Ehrenfeucht game on strings (or equivalently with respect to
first-order sentences of quantifier depth $k$). It should be noted
that, if $k\ge 3$ and $s\equiv_k s'$, then the following conditions hold.
\begin{itemize}
\item $s\in\{0\}^*$ implies $s'\in\{0\}^*$.
\item If the first symbol of $s$ is $1$ the same holds for $s'$.
\item If $s$ does not have consecutive 1's, $s'$ does not either.
\end{itemize}

We fix some $S_k$, for
each $k$. 

The proof of  Thm.~\ref{theo:fairness} uses the \HCTL-game defined above. Remember
that the spoiler opens the game with the choice of a $k \in \mathbb{N}$ and the
duplicator responds with two trees $\mathcal{T} \in L$ and $\mathcal{T'} \not\in L$.
We want to show that the duplicator has a winning strategy so we need to construct
such trees, and then need to show that the duplicator has a winning strategy for the
$k$-round core game on $\mathcal{T}$ and $\mathcal{T}'$.

We will use transition systems in order to finitely represent infinite trees.
A transition system is a $\mathcal{K} = (V,E,v_0,\ell)$ where $(V,E)$ is a directed
graph, $v_0 \in V$, and $\ell$ labels each state $v \in V$ with a finite set of propositions.
The \emph{unraveling} $T(\mathcal{K})$ is a tree with node set $V^+$ and root
$v_0$. A node $v_0\ldots v_{n-1} v_n$ is a child of $v_0\ldots v_{n-1}$
iff $(v_{n-1},v_n) \in E$. Finally, the label of a node $v_0\ldots v_n$ is
$\ell(v_n)$. 

Inspired by \cite{EmersonH86} we define transition systems $\calA_i$, for each $i\ge 0$, as
depicted in Fig.~\ref{fig:aibk} (a). Nodes in which
$p$ holds are depicted black, the others are white (and we
subsequently refer to them as black and white nodes, respectively).
\begin{figure}[h]
  \centering
\psset{unit=5mm}
  \begin{minipage}[c]{4cm}
\centering
\begin{pspicture}(0,-0.3)(7,3.2)
 \psset{arrows=->}
  \rput(0.2,2){\Large$\mathbf{\calA_0:}$}
  \bnode{1,2.5}{b0}
  \wnode{1,1.5}{w0}
  \ncline{b0}{w0}
  \nccircle[angleA=-90]{w0}{0.2}
  \rput(3.7,2){\Large$\mathbf{\calA_k:}$}
  \bnode{4.5,2.5}{bk}
  \wnode{4.5,1.5}{wk}
  \ncline{bk}{wk}
  \nccircle[angleA=-90]{wk}{0.2}
  \rput(5.5,2){ $\Rightarrow$}
  \rput(6.7,2){$\calA_{k-1}$}
\end{pspicture}
   (a) 
  \end{minipage}
\hspace{1cm}
\begin{minipage}[c]{4cm}
 \centering 
\begin{pspicture}(0,-0.3)(4.5,3.2)
 \psset{arrows=->}
  \rput(0.2,2){\Large$\mathbf{\calB_k:}$}
  \bnode{2,4}{bb}
  \wnode{2,3.2}{bw1}
  \wnode{2,2.4}{bw2}
  \wnode{2,0.8}{bw3}
  \wnode{2,0}{bw4}
  \pnode(2,1.9){p1}
  \pnode(2,1.3){p2}
  \ncline{bb}{bw1}
  \ncline{bw1}{bw2}
  \ncline{bw2}{p1}
  \ncline[linestyle=dotted,arrows=-]{p1}{p2}
  \ncline{p2}{bw3}
  \ncline{bw3}{bw4}
  \nccurve[angleA=120,angleB=-120]{bw4}{bb}
  \nccircle[angleA=-90]{bw4}{0.2}
 \rput(3,2){\large $\Rightarrow$}
  \rput(4,2){$\calA_{N_k}$}
\end{pspicture}\\
 (b)
\end{minipage}
  \caption{Illustration of the definition of (a) $\calA_k$ and (b)
    $\calB_k$. The path of white nodes in $\calB_k$ consists of $S_k$
    nodes. The double arrow $\Rightarrow$ indicates that every white
    node on the left is connected to every black node on the right.}
  \label{fig:aibk}
\end{figure}
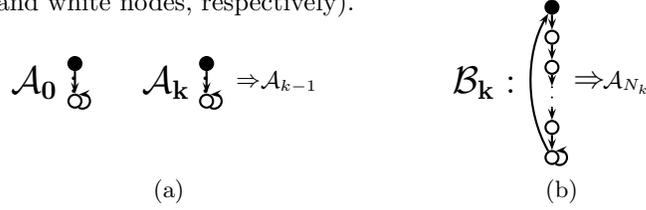

Thus, $\calA_0$ has a black (root) node and a white node with a
cycle. $\calA_i$ has a black (root) node, a white node with a cycle
and a copy of $\calA_{i-1}$. Furthermore, there is an edge from the white node below
the root of $\calA_i$ to each black node in the copy of $\calA_{i-1}$ (as
indicated by $\Rightarrow$). Let $\calT_i:=T(\calA_i)$. We first
introduce some notation and state some simple observations concerning
the tree $\calT_i$. 
\begin{enumerate}[(1)]
\item For a node $v$ in $\calT_i$ we denote the maximum number of
  black nodes on a path starting in $v$ (and not counting $v$ itself)
  the \emph{height} $h(v)$ of $v$. Then the root of $\calT_i$ has height
  $i$.
\item If $u$ and $v$ are black nodes of some $\calT_i$ with
  $h(u)=h(v)$ then the subtrees $T(u)$ and $T(v)$ induced by $u$ and $v$
  are isomorphic. 
\item The height of a tree is defined as the height of its root.
\item A white node $v$ of height $i$ has one white child (of height
  $i$) and $i$ black children of heights $0,\ldots,i-1$.
  A black node has exactly one white son.
\item Each finite path $\pi$ of $\calT_i$ induces a string
  $s(\pi) \in \{0,1\}^*$ in a natural way: $s(\pi)$ has one position, for each node
  of $\pi$, carrying a 1 iff the corresponding node is black. 
\item The root of $\calT_i$ has only one child. We call the subtree
  induced by this (white!) child $\calU_i$. If $v$ is a white node of height
  $i$ then $T(v)$ is isomorphic to $\calU_i$. 
\end{enumerate}
Next we define numbers $N_k$ inductively as follows: $N_0:=0$ and $N_k:=N_{k-1}+\max(S_3,S_k)+1$.

The following lemma shows that the duplicator has a winning strategy
in two structures of the same kind, provided they both have sufficient depth. 
\begin{lemma}
\label{lem:dupwinsTU}
  Let $i,j,k$ be numbers such that $i,j\ge N_k$. Then the duplicator
  has a winning strategy in the $k$-round core game on
\mfcsorfull{(a) $\calT_i$ and $\calT_j$, and (b) $\calU_i$ and $\calU_j$.
}
{
  \begin{enumerate}[(a)]
  \item   $\calT_i$ and $\calT_j$, and
  \item   $\calU_i$ and $\calU_j$.
  \end{enumerate}
}
\end{lemma}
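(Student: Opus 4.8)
The plan is to prove parts (a) and (b) \emph{simultaneously} by induction on $k$, after strengthening them into an invariant that the duplicator preserves round by round. Call a position with chosen nodes $a_1,\dots,a_m$ in one tree and $a_1',\dots,a_m'$ in the other \emph{$r$-good} if the correspondence $a_s\mapsto a_s'$ preserves colour (the proposition $p$), root-ness, equality, the child relation and the descendant relation in both directions; if it matches heights up to the threshold $N_r$, i.e.\ $h(a_s)=h(a_s')$ whenever either is below $N_r$ and $h(a_s),h(a_s')\ge N_r$ otherwise; and if, for every pair $a_s,a_t$ in which $a_s$ is the nearest selected strict ancestor of $a_t$, the colour strings of the two path segments between them agree up to $\equiv_{\max(3,r)}$ (likewise for the tails below the lowest selected node on each branch). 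I would show that the empty position is $k$-good, that every $0$-good position satisfies the winning condition (immediate from the first clauses), and that from an $r$-good position with $r\ge 1$ the duplicator can answer any spoiler move and reach an $(r-1)$-good one. Since the empty position on $\calT_i,\calT_j$, resp.\ $\calU_i,\calU_j$, is $k$-good and, when $i,j\ge N_k$, goodness can be maintained for $k$ rounds, the lemma follows; note that the first round is forced to be a node move, as no node is yet available for a path move.

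Node moves are routine. When the spoiler marks a node $b$ in one tree, the duplicator reads off its colour, its truncated height $\min(h(b),N_{r-1})$ and its incidences to the already chosen nodes, and locates a node $b'$ in the other tree carrying the same data. This is always possible by the structural observations (1)--(6): black nodes, resp.\ white nodes, of equal height induce isomorphic subtrees; a white node of height $h$ has black children of every height $0,\dots,h-1$ and one white child of height $h$; and heights are non-increasing along downward paths, dropping exactly at white-to-black steps. As $i,j\ge N_k>N_{r-1}$, both trees realise the full structure up to height $N_{r-1}$, so every required colour/height/incidence combination genuinely occurs.

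The path move is the heart of the matter and the only place Lemma~\ref{lem:zeroone} enters. Say the spoiler fixes a node $a_s$ and a downward path $\pi$ from it; the duplicator must first produce a path $\pi'$ from $a_s'$, and only afterwards match the point the spoiler selects on $\pi'$. If $h(a_s)<N_r$ the subtrees below $a_s$ and $a_s'$ are isomorphic and the move is won through that isomorphism, so assume $h(a_s),h(a_s')\ge N_r$. View $\pi$ as the string $s(\pi)\in\{0,1\}^*$ (black $=1$) whose $1$-positions are decorated with the heights of the corresponding black nodes; recall that along any path these heights are \emph{strictly decreasing} and that $s(\pi)$ has no two adjacent $1$'s. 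The duplicator builds $\pi'$ so that its undecorated colour string is a short $\equiv_{\max(3,r)}$-representative of $s(\pi)$, of length at most $\max(S_3,S_r)$ by Lemma~\ref{lem:zeroone}; the remarks after that lemma guarantee that the representative still has no consecutive $1$'s, begins with the correct symbol and is all-white exactly when $s(\pi)$ is, hence is realisable as a genuine path. Along $\pi'$ it copies the black heights below $N_{r-1}$ exactly and gives the heights above $N_{r-1}$ fresh, strictly decreasing values that stay $\ge N_{r-1}$. This fits because $h(a_s')\ge N_r=N_{r-1}+\max(S_3,S_r)+1$ leaves room for the at most $\max(S_3,S_r)$ ``large'' black nodes above the threshold, with one level to spare.

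When the spoiler then marks a point on $\pi'$, the duplicator answers on $\pi$ by the winning strategy in the $\max(3,r)$-round Ehrenfeucht game on the two colour strings; by construction the two marked nodes agree in colour, in truncated height and in their descendant/child relations to all previous nodes, so the position is again $(r-1)$-good, and the segment clause survives because level $\max(3,r)$ tolerates one further insertion (the $3$ preserving realisability as a path, the $r$-part the positional structure). Matching the subtrees hanging below the two marked nodes is then a strictly smaller instance of the lemma itself: a black node of height $\ge N_{r-1}$ roots a copy of some $\calT_h$ with $h\ge N_{r-1}$, handled by part (a), and a white node a copy of some $\calU_h$, handled by part (b)---which is exactly why (a) and (b) must be proved together. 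The main obstacle is precisely this path move: the duplicator must commit to $\pi'$ \emph{before} seeing where the spoiler will cut it, so $\pi'$ must at once reproduce the colour pattern robustly enough for the string game and carry heights able to answer every later cut; reconciling these two demands within the available depth is what the recursion $N_k=N_{k-1}+\max(S_3,S_k)+1$ is engineered to guarantee.
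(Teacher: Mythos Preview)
Your approach---maintaining an $r$-good invariant through the rounds---is close in spirit to the paper's inductive argument and uses the same ingredients (Lemma~\ref{lem:zeroone}, the height thresholds $N_r$, and the isomorphism of equal-height subtrees). The paper, however, organises the induction differently: rather than a single global invariant, after the first move it decomposes the remaining game into independent sub-games (a string Ehrenfeucht game on the root-paths, an inductive game on the subtrees below the chosen nodes, and a case analysis for branches off the paths) and composes the resulting sub-strategies.

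There is a genuine gap in your node-move case. You call it ``routine'' to find $b'$ with the right colour, truncated height and incidences, but consider $b$ lying on an existing segment between $a_s$ (with $h(a_s)\ge N_r$) and a selected descendant $a_t$ (with $h(a_t)<N_{r-1}$). Then $b'$ is forced to lie on the unique $a_s'$--$a_t'$ path, and your invariant only says the two segment colour strings are $\equiv_{\max(3,r)}$-equivalent. But the \emph{heights} of the intermediate black nodes on the $a_s'$--$a_t'$ path were fixed when $a_t'$ was chosen, and colour strings do not determine them: a white-to-black step can drop the height by any amount, so two segments with identical colour strings can have completely different height profiles. Hence if $b$ is black with $h(b)<N_{r-1}$, there need not be any node on the $a_s'$--$a_t'$ path with height $h(b)$, and the truncated-height clause of your invariant cannot be restored. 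The paper avoids this by a structural choice at the moment such an $a_t'$ is selected: it locates the highest black node $u_1$ on the root-path to $a_t$ with $h(u_1)\le N_{k-1}$, places a short $\equiv$-representative \emph{above} that point, and uses the subtree isomorphism $T(u_1)\cong T(u_1')$ \emph{below} it, so that all low-height intermediates match exactly. Your invariant would need to record this two-zone structure (a ``high'' part where only the colour string matters and a ``low'' part carried by an explicit isomorphism) for the node-move step to go through.

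A smaller point: in the path-move case you invoke Lemma~\ref{lem:zeroone} on $s(\pi)$, but $\pi$ may be infinite. Since every path in $\calT_i$ has only finitely many black nodes, you should take the finite prefix up to the last black node, apply the lemma there, and extend $\pi'$ by the unique infinite white tail---which is what the paper does.
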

\mfcsorfull{ 
\begin{proof}[Sketch]
  In both cases, the proof is by induction on $k$, the case $k=0$ being
  trivial. We consider (a) first. Let $k>0$ and let us assume that the
  spoiler chooses $v\in\calT_i$ in her first node move. We distinguish
  two cases based on the height of $v$.
  \begin{description}
  \item [$h(v)> N_{k-1}$:] Let $\pi$ denote the path from $r$ to $v$. By
Lemma~\ref{lem:zeroone} there is a string $s'$ with $|s'| \le S_l$ such that 
$s(\pi)\equiv_l s'$, where $l=\max(k,3)$. Here, $l\ge 3$
  guarantees in particular that $s'$ does not have consecutive 1's.
As $j\ge N_k =
N_{k-1}+S_l+1$,
there is a node
 $v'$ of height $\ge N_{k-1}$ in $\calT_j$ such that the path $\pi'$ from $r'$ to 
$v'$ satifies $s(\pi')=s'$. 
The duplicator 
chooses $v'$ as her answer in this round. By a compositional argument,
involving the induction hypothesis, it can be shown that the
duplicator has a winning strategy for the remaining $k-1$ rounds.
  \item [$h(v)\le N_{k-1}$:] Let $\pi$ be the path from $r$ to $v$, and $u_1$ be the 
highest black node on $\pi$ with $h(u_1) \le N_{k-1}$. Then we must have
$h(u_1) = N_{k-1}$ because $\pi$ contains black nodes of height up to 
$i \ge N_k$. Hence, $u_1$ has a white parent $u_2$ s.t.\ $h(u_2)>N_{k-1}$. We
determine a node $u_2'$ in $\calT'$ in the same way we picked $v'$ for $v$ in
the first case. In particular, $h(u'_2)\ge N_{k-1}$ and for the paths
$\rho$ leading from $r$ to $u_2$ and $\rho'$ leading from $r'$ to $u_2'$ we have 
$s(\rho)\equiv_k s(\rho')$. 

Let $u'_1$ be the black
child of $u'_2$ of height $h(u_1)$. As $h(u_1)=h(u'_1)$ there is an
isomorphism $\sigma$ between $T(u_1)$ and $T(u_2)$ and we choose
$v':=\sigma(v)$. 
An illustration is given in Figure \ref{fig:titj}.

The winning strategy of the duplicator for the remaining $k-1$ rounds
follows $\sigma$ on  $T(u_1)$ and $T(u_2)$ and is analogous to the
first case in the rest of the trees.
\mfcsonly{
The case of path moves is very similar, see \cite{KaraLSW09}.
}
  \end{description}
\qed
\end{proof}
}
{ 
\begin{myproofof}{Lemma \ref{lem:dupwinsTU}}
The proof is by induction on $k$, the case $k=0$ being trivial. Thus, let $k>0$. We 
first show (a). Let us assume first that the spoiler makes a {\bf node move} in $\calT_i$ 
on $v$ (node moves in $\calT_j$ are symmetric). We distinguish two cases depending on the 
height of $v$. In both cases let $r,r'$ be the roots of $\calT_i$ and $\calT_j$ respectively.

{\bf Case} $h(v)> N_{k-1}$. Let $\pi$ denote the path from $r$ to $v$. By
Lemma~\ref{lem:zeroone} there is a string $s'$ with $|s'| \le S_l$ such that 
$s(\pi)\equiv_l s'$, where $l=\max(k,3)$. As $j\ge N_k =
N_{k-1}+S_l+1$,
there is a node
 $v'$ of height $\ge N_{k-1}$ in $\calT_j$ such that the path\footnote{It should be noted that $l\ge 3$
  guarantees in particular that $s'$ does not have consecutive 1's.} $\pi'$ from $r'$ to 
$v'$ satifies $s(\pi')=s'$. The duplicator 
chooses $v'$ as her answer in this round. We have to show that she has a winning strategy for the
remaining $k-1$ rounds. Her strategy is a composition of the following
three strategies for different parts of the trees.\footnote{By a
  standard argument the different strategies can be combined into a
  strategy on $\calT$ and $\calT'$ (see again \cite{Libkin04}). It is
  helpful here that path moves only involve paths that are below all
  previously selected nodes. Hence, a path move always only affects
  one of the sub-games.}
\begin{enumerate}[(i)]
\item If the spoiler does a (node or path) move in $T(v)$ or $T(v')$
  the duplicator can reply according to her winning strategy in the
  $(k-1)$-round game in these two trees which is guaranteed by induction as both
  have height $\ge N_{k-1}$.
\item If the spoiler chooses a node on $\pi$ or $\pi'$ then the duplicator
  answers following her strategy in the $k$-round Ehrenfeucht game on the strings $s(\pi)$
  and $s(\pi')$.
\item The remaining (and most complicated) sub-strategy concerns
  moves elsewhere in $\calT_i$ (the case of a move elsewhere in $\calT'_j$
  is again symmetric).  Let $w$ be a node chosen by the spoiler (in a
  node move or as the starting node of a path). Let $y$ be the last node of $\pi$ on the 
  path $\rho$ from $r$ to $w$. Node $y$ has two different successors -- namely one on
  $\pi$ and one on $\rho$, hence it must be white by fact (4) above. Let $z$ be its
  successor on $\rho$. Let $y'$ be the node corresponding to $y$ on $\pi'$ as
  induced by the winning strategy of the duplicator on $s(\pi)$ and
  $s(\pi')$.
  \begin{itemize}
  \item If $z$ has height $<N_{k-1}$ let $z'$ be the unique
    (black!)\footnote{$z$ must be black as all nodes on $\pi$ have
      height $\ge N_{k-1}$ and only black nodes may have a smaller
      height than their parent.}
 child of $y'$ of the same height as $z$. By fact (2) above, $T(z)$ and $T(z')$ 
    are isomorphic and the duplicator has a winning
    strategy in these two subtrees induced by an isomorphism. 
     \item If $z$ has  height $\ge N_{k-1}$ let $z'$ be some child of $y'$
    (of the same color as $z$, not on $\pi'$, and with height $\ge N_{k-1}$. 
    (Note that $z$ and $z'$ can be both black or both white). By induction the 
    duplicator has a winning strategy on $T(z)$ and $T(z')$.
     \end{itemize}
\end{enumerate}

\noindent
{\bf Case} $h(v)\le N_{k-1}$. Let $\pi$ be the path from $r$ to $v$, and $u_1$ be the 
highest black node on $\pi$ with $h(u_1) \le N_{k-1}$. Then we must have
$h(u_1) = N_{k-1}$ because $\pi$ contains black nodes of height up to 
$i \ge N_k$. Hence, $u_1$ has a white parent $u_2$ s.t.\ $h(u_2)>N_{k-1}$. We
determine a node $u_2'$ in $\calT'$ in the same way we picked $v'$ for $v$ in
the first case. In particular, $h(u'_2)\ge N_{k-1}$ and for the paths
$\rho$ leading from $r$ to $u_2$ and $\rho'$ leading from $r'$ to $u_2'$ we have 
$s(\rho)\equiv_k s(\rho')$. 

Let $u'_1$ be the black
child of $u'_2$ of height $h(u_1)$. As $h(u_1)=h(u'_1)$ there is an
isomorphism $\sigma$ between $T(u_1)$ and $T(u_2)$ and we choose
$v':=\sigma(v)$. 
An illustration is given in Figure \ref{fig:titj}.

The winning strategy of the duplicator for the remaining $k-1$ rounds
follows $\sigma$ on  $T(u_1)$ and $T(u_2)$ and is analogous to the
first case in the rest of the trees.\\

Next, we assume that the first move of the spoiler is a {\bf path
  move} in $\calT$ (path moves in $\calT'$ are again symmetric). 

Let $\pi$ be the path chosen by the spoiler. Let $v$ be the lowest
black node of $\pi$. Let $v'$ be the node chosen by the duplicator had
the spoiler selected $v$ in a node move and let $\pi'$ be the path
from $r'$ to $v'$ extended by the unique infinite white path below $v'$. We can distinguish the same
two cases as for node moves. 

If the node $a'$ selected by the spoiler on $\pi'$ is from $T(v')$ (in case
1) or from $T(u'_1)$ (in case 2), the duplicator can choose a
corresponding node $a$ by the isomorphism. Otherwise,  the duplicator
replies by the node $a$ of $\rho$ induced by the $k$-round (!) winning strategy of the
duplicator on $s(\rho)$ and $s(\rho')$.  

This completes the inductive step for (a).

For (b) the proof is completely analogous.
This completes the proof of the lemma.\\
\end{myproofof}
}
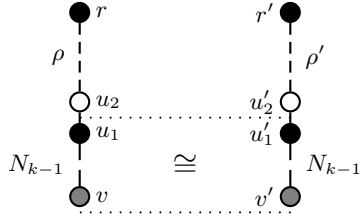
\begin{figure}[t]
  \centering
\psset{unit=7mm}
  \begin{pspicture}(6,3.5)
    \bnode{1,3.5}{r1}
\uput{0.3}[0](1,3.5){$r$}
    \wnode{1,1.8}{u21}
\uput{0.3}[0](1,1.8){$u_2$}
    \bnode{1,1.2}{u11}
\uput{0.3}[0](1,1.2){$u_1$}
    \gnode{1,0}{v1}
 \uput{0.3}[0](1,0){$v$}
   \ncline[linestyle=dashed]{r1}{u21}\nbput{$\rho$}
    \ncline{u21}{u11}
    \ncline[linestyle=dashed]{u11}{v1}\nbput{$N_{k-1}$}
    \bnode{5,3.5}{r2}
\uput{0.3}[180](5,3.5){$r'$}
    \wnode{5,1.8}{u22}
\uput{0.3}[180](5,1.8){$u_2'$}
    \bnode{5,1.2}{u12}
\uput{0.3}[180](5,1.2){$u_1'$}
    \gnode{5,0}{v2}
 \uput{0.3}[180](5,0){$v'$}
   \ncline[linestyle=dashed]{r2}{u22}\naput{$\rho'$}
    \ncline{u22}{u12}
    \ncline[linestyle=dashed]{u12}{v2}\naput{$N_{k-1}$}
   \psline[linestyle=dotted](1.,1.5)(5,1.5)
   \psline[linestyle=dotted](1.,-0.3)(5,-0.3)
   \rput(3,0.6){\large$\cong$}
  \end{pspicture}
  \caption{Illustration of the case where $h(v)\le N_{k-1}$. The
    colors of $v$ and $v'$ are not known a priori.}
  \label{fig:titj}
\end{figure}


We are now prepared to prove Thm.~\ref{theo:fairness}.
\mfcsorfull{
\begin{proof}[of Thm.~\ref{theo:fairness}]
By Thm.~\ref{sameExpPower} it is sufficient to show that no formula
equivalent to $\E\Finfty p$ exists in \HCTL. To this end, we prove that the duplicator has a winning strategy in the
\HCTL-game for the set of trees fulfilling $\E\Finfty p$.

We define transition systems $\calB_k$, for $k\ge 0$. As
illustrated in Figure \ref{fig:aibk} (b), $\calB_k$ has a black root
from which a path of length $S_k$  of white nodes starts. The last of
these white nodes has a self-loop and an edge back to the
root. Furthermore, $\calB_k$ has a copy of $\calA_{N_k}$ and there is
an edge from each white node of the initial path to each black node of
the copy of $\calA_{N_k}$. 
Clearly, for each $k$, $T(\calB_k)\models\E\Finfty p$ and 
$T(\calA_k)\not\models\E\Finfty p$. 

It can be shown that, for each $k$, the
duplicator has a  winning strategy in the $k$-round  core game on
$\calT=T(\calB_k)$ and $\calT'=T(\calA_{N_k})$ \cite{KaraLSW09}. 
\qed
\end{proof}
}
{ 
\begin{myproofof}{Theorem \ref{theo:fairness}}
By Theorem \ref{sameExpPower} it is sufficient to show that no formula
equivalent to $\E\Finfty p$ exists in \HCTL. To this end, we prove that the duplicator has a winning strategy in the
\HCTL-game for the set of trees fulfilling $\E\Finfty p$.

We next define transition systems $\calB_k$, for $k\ge 0$. As
illustrated in Figure \ref{fig:aibk} (b), $\calB_k$ has a black root
from which a path of length $S_k$  of white nodes starts. The last of
these white nodes has a self-loop and an edge back to the
root. Furthermore, $\calB_k$ has a copy of $\calA_{N_k}$ and there is
an edge from each white node of the initial path to each black node of
the copy of $\calA_{N_k}$. 

Clearly, for each $k$, $\calB_k\models\E\Finfty p$ and 
$\calA_k\not\models\E\Finfty p$. 

We show in the remainder of the proof that, for each $k$, the
duplicator has a  winning strategy in the $k$-round  core game on $\calT=T(\calB_k)$ and $\calT'=T(\calA_{N_k})$. 

We call the nodes of $\calT$ induced by the extra nodes that $\calB_k$
has over $\calA_{N_k}$ \emph{extra nodes}. 

The proof is again by induction on $k$ and it is very similar to the
proof of Lemma \ref{lem:dupwinsTU}. The case $k=0$ is again trivial.

A node move $v$ in $\calT$ is answered just as it was in the proof of
the lemma. The duplicator then wins by induction. Likewise node moves $v'$ in $\calT'$ are
answered as in the proof of the lemma and (besides the root) no special black
node of $\calT$ is involved. 

It remains to deal with path moves. Path moves starting in $\calT'$
can be handled as in the proof of the lemma. Likewise path moves starting in
$\calT$ with a path $\pi$ of finitely many black nodes can be handled
along the same lines.

The only real new case is when the spoiler chooses a path $\pi$ in
$\calT$ that has infinitely many black nodes. The duplicator answers
by choosing the unique path $\pi'$ of $\calT'$ starting from $r'$ that only
consists of white nodes. Let $v'$ be a node of $\pi'$ that is
chosen by the spoiler. By (the remark after) Lemma \ref{lem:zeroone},  there is a
string $s$ of $\le S_k$ zeros such that $s\equiv_k s(\rho')$, where $\rho'$ is
the path from $r'$ to $v'$. The duplicator thus picks the node $v$ on
$\pi$ such that $s(\rho)=s$, where $\rho$ denotes the path from $r$ to
$v$. Note that by construction, the initial white path of $\calT$ is
long enough to guarantee the existence of such a node $v$. That the
duplicator has a winning strategy for the remaining $k-1$ rounds can
be shown along the same lines as before. Subsequent choices of paths with
infinitely many black nodes are answered by paths with one black node
and infinitely many white ones.
\end{myproofof}
}



\section{Satisfiability of \Honeplus} \label{sec:complexity}
\begin{theorem}\label{compl-lower-bound}
  Satisfiability of {\Honeplus} is hard for {\Texptime}. 
\end{theorem}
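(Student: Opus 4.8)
The plan is to establish \Texptime-hardness by a polynomial reduction from the acceptance problem of alternating Turing machines whose work tape is bounded by $2^{2^n}$ cells on inputs of length $n$. Since alternating doubly-exponential space coincides with \Texptime, it suffices to produce, for each such machine $M$ and input $w$, an \Honeplus-formula $\varphi_{M,w}$ of size polynomial in $|M|+|w|$ that is satisfiable exactly when $M$ accepts $w$. A model of $\varphi_{M,w}$ will encode an accepting run of $M$ as a tree: the branching of the computation tree at existential and universal configurations is mirrored by the branching of the model, with the quantifiers $\E$ and $\A$ selecting successor configurations, so that alternation is handled for free by the path quantifiers of \CTLplus.

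The core of the construction is a nested system of binary counters addressing the $2^{2^n}$ tape cells of a single configuration. I would lay out each configuration along a path as a sequence of \emph{cell blocks}; each cell block stores a tape symbol (and, where the head is present, the control state) together with a \emph{cell address} in $\{0,\dots,2^{2^n}-1\}$; and each address is itself spelled out as a sequence of $2^n$ \emph{bit nodes}, each carrying a bit value and a \emph{bit position} in $\{0,\dots,2^n-1\}$ written in $n$ ordinary bits. Standard \CTL-style formulas force the innermost $n$-bit position counter to run through all values $0,\dots,2^n-1$ in order, and force the address counter to increment by one from each cell block to the next, so that a well-formed configuration lists exactly the addresses $0,\dots,2^{2^n}-1$ in increasing order.

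The hybrid variable together with the succinctness of \CTLplus is what pushes the construction from the doubly-exponential setting governing \Hone into the triply-exponential setting. To verify the transition relation one must, for a cell block in one configuration, locate the cell block with the \emph{same} $2^n$-bit address in the neighbouring configuration and compare their tape symbols (and those of the address-adjacent cells). I would do this by binding $x$ with $\dx$ to the chosen cell, stepping to the next configuration, and asserting via $\ar$, $\ax$ and a \CTLplus path condition that some cell there has an address agreeing with the one marked by $x$ in every bit position. Equality of two $2^n$-bit addresses is a universal statement over all $2^n$ positions, and here \CTLplus lets a single polynomial-size path formula assert simultaneously, along one fixed path, that no position carries differing bits---precisely the exponential gain in succinctness (the very blow-up quantified in Theorem~\ref{sameExpPower}) that \CTL cannot match. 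Matching of corresponding bit positions across the two addresses is in turn reduced to comparing their $n$-bit position counters, a task already solvable by the inner, single-variable gadget.

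The main obstacle I anticipate is exactly this address-comparison gadget: one must phrase ``the address marked by $x$ equals the address of the current cell'' as a single \Honeplus subformula of size polynomial in $n$, using the one available variable to pin down the marked cell while a \CTLplus conjunction sweeps over all bit positions of the target cell, and one must ensure that the universal check over the doubly-exponentially many bit positions does not covertly require storing a value that would force a blow-up. Getting this gadget right---and, orthogonally, writing the polynomially many well-formedness formulas that reject malformed trees (wrong counter order, missing or duplicated addresses, inconsistent head placement)---is where the real work lies; the remaining initial, accepting and transition conditions are then routine. Robustness of the bound to weak fragments and strong extensions follows because the gadgets use only $\E$, $\A$, $\X$, $\U$, one variable, and the operators $\dx$, $\ax$, $\ar$, all of which survive in the fragments in question.
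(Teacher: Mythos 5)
Your high-level architecture coincides with the paper's. The paper also reduces from a \Texptime-complete alternation problem --- a two-player corridor tiling game over a board of width $2^{2^n}$, which is just the standard combinatorial disguise of the alternating doubly-exponential-space machines you propose --- and it also lays out each row (configuration) along a path as $2^{2^n}$ position blocks whose addresses are spelled out by $2^n$ bit nodes, each indexed by an $n$-bit counter in propositions $b_0,\ldots,b_{n-1}$, with the single hybrid variable plus polynomial-size \CTLplus path conjunctions doing the inter-row comparison. So the choice of source problem and the nested-counter layout are sound and essentially the paper's.

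The gap is precisely the gadget you flag and then leave open. Marking one cell with $\dx$ does not by itself let a path formula pair up the $i$-th bit node of the marked address with the $i$-th bit node of a candidate address in the neighbouring row: a downward path that leaves a bit node $u$ of one address sequence and reaches the other sequence necessarily runs through \emph{all} of that sequence's bit nodes, so $\F$-subformulas cannot tell which of them carries which bit value, and the one variable is already spent on the cell marker, so it cannot also pin down the partner bit node. The paper closes this with the Vardi--Stockmeyer \emph{copy-node} device: every original node is forced to have a child starting an infinite branch of copy nodes (labelled $c$) that all repeat its propositional label. A path leaving $u$ can then be steered by a \CTLplus conjunction (using $\G(\neg c\rightarrow \E\F x)$ together with a clause forbidding the path to enter a later position sequence) so that it dives into the copy branch of exactly one node $u'$ of the target sequence, where $u'$'s label persists forever; the conjunct $\bigwedge_{i=0}^{n-1}(b_i\leftrightarrow\F(c\wedge b_i))$ then simultaneously forces $u'$ to have the same $n$-bit position index as $u$ and compares the payload bit $b$. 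Without this duplication of labels into the tree structure itself (or an equivalent mechanism), the ``one path sweeps over all $2^n$ bit positions'' step of your plan does not go through, and that step is the crux of the entire lower bound.
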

\begin{proof}
The proof is by reduction from a tiling game (with \Texptime complexity) to the satisfiability problem of
\Honeplus. Actually we show that the lower bound even holds for the fragment of \Honeplus without the
$\U$-operator (but with the $\F$-operator instead).

An instance $I=(T,H,V,F,L,n)$ of the \emph{2EXP-corridor tiling game} consists of a finite set $T$ of \emph{tile
types}, two relations $H,V \subseteq T \times T$ which constitute the \emph{horizontal and vertical
constraints}, respectively, two sets $F,L \subseteq T$ which describe
the starting and end conditions, respectively, and a number $n$
given in unary. The game is played by two players, $E$ and $A$, on a board consisting of $2^{2^n}$ columns
and (potentially) infinitely many rows. Starting with player $E$ and following the constraints $H$, $V$ and $F$ the
players put tiles to the board consecutively from left to right and row by row. The constraints prescribe the following conditions:
\begin{itemize}
\item A tile $t'$ can only be placed immediately to the right of a tile $t$ if $(t,t') \in H$.
\item A tile $t'$ can only be placed immediately above a tile $t$ if $(t,t') \in V$.
\item The types of all tiles in the first row belong to the set $F$. 
\end{itemize}
Player $E$ wins the game if a row is completed containing only tiles
from $L$ or if $A$ makes a move that violates the constraints. On the
other hand, player $A$ wins if $E$ makes a forbidden move or the game goes on ad infinitum.

A winning strategy for $E$ has to yield a countermove for all possible moves
of $A$ in all possible reachable situations. Furthermore, the starting condition and the horizontal and vertical constraints have to be respected.
Finally, the winning strategy must guarantee that either 
player $A$ comes into a situation where he can no longer make an
allowed move or a row with tiles from $L$ is completed.

The problem to decide for an instance $I$ whether player $E$ has a winning strategy on $I$ is complete for 
\Texptime. This follows by a straightforward extension of \cite{Chlebus86}.

\fullonly{
Now we show in full detail how to build a formula $\varphi_{I}$ of
length $\bigO(|I|\cdot|T|)$ from an instance $I$ with tile set $T$ of the 2EXP-corridor tiling
game such that $\varphi_{I}$ is satisfiable if and only if player $E$
has a winning strategy in the game for $I$. In fact, a tree will
satisfy $\varphi_I$ if and only if it encodes a winning strategy of
player $E$. Here, the encoding tree represents all possible plays (for
the various moves of player $A$) for a fixed (and winning) strategy of
player $E$.

\paragraph*{Encoding of the winning strategy for player $E$.} 
We encode  strategies for player $E$ as  $T$-labeled trees in which each move is
represented by a sequence of nodes (see Figure \ref{FullStrategy}). The first move of player $E$ is represented by a sequence
starting at the root. Each sequence corresponding to a move of $E$ is followed by several branches, one
for every possible next move of $A$. Each sequence
	corresponding to a move of $A$ is followed by one sequence of nodes
corresponding to the move of player $E$ following the strategy. It is clear that
such a tree represents a \emph{winning} strategy if every root path
corresponds to a sequence of moves resulting in a win for player
$E$.

In the encoding we use the propositions $\{\pose, \poso,b_0,...,b_{n-1},\rowe, \rowo,b,$
$o,c,q_{\sharp}\} \cup \{p_t \ |\ t \in T\}$.  In order to be able to describe the constraints via a
\Honeplus-formula of polynomial length we serially number all positions of a row of the board in the
style of \cite{VardiS85}. While in \cite{VardiS85} a
row\footnote{Actually, the proof in \cite{VardiS85} uses alternating
  Turing machines and thus encodes configurations rather than rows.} consists of $2^n$ positions we have to
deal with $2^{2^n}$ positions in the current proof. We encode each
position by a sequence of $2^n$ nodes, each of which we call \emph{position
bits}. For each of these nodes the
propositions $b_0,...,b_{n-1}$ encode a binary number. Each position
bit in turn represents one bit of a binary
number of length $2^n$ via proposition $b$\footnote{It should be noted that the \emph{lowest} bit of this binary number is represented by the position bit with the \emph{highest}
number.}. Each such sequence is
preceded by a \emph{position node} which holds some additional information that will be described later. We call a sequence of length $2^n+1$ representing one position of
the tiling a \emph{position sequence}. In each position bit of a position
sequence proposition $p_t$ holds for the tiling type $t$ of its tile.  A row is represented by
$2^{2^n}$ position sequences preceded by a \emph{row node}.  
Altogether, a row is represented by a \emph{row sequence} consisting of $(2^n+1)2^{2^n}+1$
nodes.

For technical reasons, each position node of
an even (odd) position is marked using the proposition
$\pose$, ($\poso$, respectively). Likewise, the row nodes of
even (odd) rows are marked using $\rowe$ ($\rowo$). It is worth noting
that the tree branches only\footnote{After the modification in the
  next paragraph, this statement only holds with respect
to \emph{original nodes}.} after position nodes in which $\poso$ holds
as the odd positions are tiled by $A$.

%


To compare two nodes of a path that are far apart  we use a technique that was originally invented in \cite{VardiS85} and was also applied 
in \cite{JohannsenL03}. To this end, we use two kinds of nodes: \emph{original nodes} which are
labelled with the proposition $o$ and \emph{copy nodes} labelled with the proposition $c$.  For the
encoding of the winning strategy only the original nodes are relevant. Each original node has a copy
node with identical propositions (except for the proposition $o$) as a child. Likewise, each copy node
has only copy nodes with identical propositions as children (see Figure~\ref{rowwinStrategy}). Copy nodes
will enable us to ''mark'' an original node $v$ by fixing a path $\pi$ through $v$ and its copy node
child. Since copy nodes carry the propositions of their parent
original nodes, assertions about an original node can be tested in any
of their subsequent copy nodes.

Proposition $q_{\sharp}$ is used to label the part of the tree which does not belong to the encoding of the winning strategy. 
\paragraph{Testing vertical constraints.} 
The most difficult condition to test is that a tree respects the
vertical constraints. Thus, we have to check the following condition:
\emph{
  For every row,
  except the first one, the tile of every position is consistent with the
  tile of the corresponding position in the previous row.}

To this end, we have to compare two position sequences representing corresponding positions in consecutive rows. Two 
sequences represent corresponding positions if,
whenever two position bits are equivalent with respect to
$b_0,\ldots,b_{n-1}$, they are also equivalent with respect to
$b$. The technical challenge is to do this comparison with a formula
of linear (as opposed to exponential) size.
Finally, it has to be checked that the position bits of the two
position sequences are consistent with respect to $V$.

Let $r$ and $r'$ be row sequences representing two consecutive rows.
Let $s'$ be some position
sequence of $r'$ for which consistency with the corresponding position
of the previous row $r$
shall be checked. 

We first give an informal description of the formula that checks the
vertical constraints. Assume that $x$ is associated with the last
position bit of the position sequence $s'$ representing the $j$-th
position in $r'$. We first construct a formula $\xi$ that becomes
true exactly at the first position bit $v$ of the position sequence $s$
representing position $j$ in  row $r$ on the path to $x$. 

To this end, $\xi$ checks that there is a path starting in $v$,
continuing at least to the last node of $s$ (from where it might follow copy nodes) and from each non-copy node $u$ on this path, there is
a path leading to  some copy node of a node $u'$ in $s'$ with exactly the
same propositions $b_0,\ldots,b_{n-1},b$. 

This can be expressed as
\begin{align*}
  \label{eq:1}
\xi=\ &o \wedge \bigwedge_{i=0}^{n-1} \neg b_i \land\E\big[ \F \bigwedge_{i=0}^{n-1} b_i \land \G\big(o \rightarrow 
\E[(\F \rowo \wedge \neg\F \rowe \vee \F \rowe \wedge \neg\F \rowo)\land \\
&\G(\neg c \rightarrow \E\F x) \wedge \F\E(\F x \land \G\neg\pos)\land \bigwedge_{i=0}^{n-1} (b_i \leftrightarrow \F(c \wedge b_i)) \wedge b \leftrightarrow \F(c \wedge b)] \big)\big]
\end{align*}
Here, $\pos$ is an abbreviation for $\poso\lor\pose$ indicating that a
node is a position node. The path formula $\F \bigwedge_{i=0}^{n-1} b_i$ 
ensures that the current path continues at least to the last position bit of the current position from where
it might follow copy nodes. The path formula $\F \rowo \wedge \neg\F \rowe \vee \F \rowe \wedge \neg\F \rowo$ 
makes sure that the current path meets exactly two rows. The
path formula               
$\G(\neg c \rightarrow \E\F x) \wedge \F\E(\F x \land \G\neg\pos)$
tests that the path reaches
the same position sequence as the node carrying $x$ but no subsequent
position sequence.

The vertical constraints now hold if, whenever $x$ is put to the last
position node of some position sequence $s'$ with tile type $t'$, if
at some node $v$ the formula $\xi$ holds then the tile type $t$ at $v$
has to be such that $(t,t')\in V$. This can now be expressed by
$\A\G\big[\bigwedge_{t'\in T}[(o\land\bigwedge_{i=0}^{n-1}b_i\land
  p_{t'})\rightarrow\dx.\ar\A\G(\xi\rightarrow\bigvee_{(t,t')\in V}p_t)]\big]$.

For an instance $I$ of 2EXP-corridor tiling game we present the whole
formula $\varphi_I$ of length in $\bigO(|I||T|)$ such that $\varphi_I$ is satisfiable if and only if player $E$ has a winning strategy in the game for $I$. The formula $\varphi_I$ is composed of the conjunction of $\chi = \bigwedge_{i=1}^{10} \chi_i$ and $\psi = \bigwedge_{i=1}^7 \psi_i$ where 
$\chi$ describes the basic tree structure that is needed to formulate a strategy and 
the $\psi$ guarantees that the model of $\varphi_I$ corresponds to
a winning strategy for player $E$. Each of the subformulas $\chi_i$,
$\psi_i$ is of length $\bigO(|I||T|)$.

We first introduce some abbreviations: 
\begin{itemize}
\item $\pos = \pose \vee \poso$ (\emph{pos}ition node)
\item $\row = \rowe \vee \rowo$ (\emph{row} node)
\item $\phifirst = o \wedge \bigwedge_{i=0}^{n-1} \neg b_i$
(\emph{first} (and original) node in a position sequence)
\item $\philast = \bigwedge_{i=0}^{n-1} b_i$
(\emph{last} node in a position sequence, not necessarily original)
\item $\psicur = \G \neg \pos \wedge \F(c \wedge
\philast)$ (the path extends exactly until the
  end of the position sequence and continues with copy nodes; in this
  sense it is a \emph{full} path)
\item $\psipos = (\F \pose \wedge \neg \F \poso) \vee (\F \poso \wedge
\neg \F \pose)$ (the path meets \emph{two} (consecutive) \emph{pos}ition
  sequences)
\item $\psirow = (\F \rowe \wedge \neg \F \rowo) \vee (\F \rowo
  \wedge \neg \F \rowe)$ (the path meets \emph{two} (consecutive) \emph{row}
  sequences)
\end{itemize}
The first formula helps to describe some properties in a simple way.
\newline
\ourcondition{T1}{Each node of the tree is labelled with exactly one of the propositions $\rowe$, $\rowo$, $\pose$, $\poso$, $q_{\sharp}$, $o$ and $c$.} 
\begin{align*}
\chi_1 = \A\G [&(\row \vee  \pos \vee q_{\sharp} \vee o \vee c) \wedge \\ 
&(\pose \rightarrow \neg \poso \wedge \neg \rowe \wedge \neg \rowo \wedge \neg q_{\sharp} \wedge \neg o \wedge \neg c) \wedge  \\
&(\poso \rightarrow \neg \pose \wedge \neg \rowe \wedge \neg \rowo \wedge \neg q_{\sharp} \wedge \neg o \wedge \neg c) \wedge  \\
&(\rowe \rightarrow \neg \pose \wedge \neg \poso \wedge \neg \rowo \wedge \neg q_{\sharp} \wedge \neg o \wedge \neg c) \wedge  \\
&(\rowo \rightarrow \neg \pose \wedge \neg \poso \wedge \neg \rowe \wedge \neg q_{\sharp} \wedge \neg o \wedge \neg c) \wedge  \\
&(q_{\sharp} \rightarrow \neg \pose \wedge \neg \poso \wedge \neg \rowe \wedge \neg \rowo \wedge \neg o \wedge \neg c) \wedge \\
&(o \rightarrow \neg \pose \wedge \neg \poso \wedge \neg \rowe \wedge \neg \rowo \wedge \neg q_{\sharp} \wedge \neg c) \wedge \\         
&(c \rightarrow \neg \pose \wedge \neg \poso \wedge \neg \rowe \wedge \neg \rowo \wedge  \neg q_{\sharp} \wedge \neg o)]  
\end{align*}
\ourcondition{T2}{The root induces with the proposition $\rowe$ the encoding of the first row and every node labelled with $\rowe$ or $\rowo$ has exactly one child labled with $\pose$ signalising the encoding of a new position.}
\begin{align*} 
\chi_2 = \rowe \wedge \A\G[\row \rightarrow \E\X (\pose \wedge \dx.@_{\myroot} \E\F(\E\X x \wedge \A\X x))]
\end{align*}
\ourcondition{T3}{Every child of a $\pose$-or $\poso$-node is labelled with the initial position bit number encoded by $b_0 \ldots b_{n-1}$ .}
\begin{align*}
\chi_3 = \A\G[\pos \rightarrow \A\X\phifirst]
\end{align*}
\ourcondition{T4}{As long as the last position bit of a position is
  not reached, the next node is labelled with the next position bit
  number.}\\
In order to keep the length of $\chi_4$ within the bound
$\bigO(|I||T|)$, we make use of additional propositions
$d_0,\ldots,d_{n-1}$ and $e_0,\ldots,e_{n-1}$. The idea is that $d_i=1$ iff
$b_j=1$, for all $j<i$ and that  $e_i=1$ iff
$b_j=0$, for all $j<i$. 
\begin{align*} 
\chi_{4} ={}  &\chi_{4a}\land\chi_{4b}\\
\chi_{4a} ={} &\A\G\big(d_0 \land  \bigwedge _{i=1}^{n-1} [d_i \leftrightarrow
(d_{i-1}\land b_{i-1})]  \land e_0 \land \bigwedge _{i=1}^{n-1} [e_i \leftrightarrow
(e_{i-1}\land \neg b_{i-1})]\big)\\ 
\chi_{4b} ={} & \A\G\big[(o \wedge \neg \philast)
\rightarrow \\
&\qquad \big( \dx.\E\X(o \wedge [(e_{n-1} \land b_{n-1} \land \ax \neg
b_{n-1})\lor(\neg e_{n-1} \land (b_{n-1}\leftrightarrow \ax b_{n-1}))]\land\\
&\qquad \qquad \qquad \bigwedge_{i=0}^{n-2} 
[(e_{i+1}\land\ax d_{i+1}) \lor (e_i \land b_i \land \ax \neg
b_i)\lor(\neg e_i \land (b_i\leftrightarrow \ax b_i))])\big)\big]\\
\end{align*}
\ourcondition{T5}{Each position bit has a child, which represents a copy of it. The nodes of a subtree rooted at a copy node are labelled exactly with the same propositions.}
\begin{align*} 
\chi_5 = \A\G[&(o \rightarrow \dx. \E\X (c \wedge \bigwedge_{i=0}^{n-1}(b_i \leftrightarrow @_x b_i) \wedge b \leftrightarrow @_x b)) \wedge \\
&(c \rightarrow \dx.\A\G (c \wedge \bigwedge_{i=0}^{n-1}(b_i \leftrightarrow @_x b_i) \wedge b \leftrightarrow @_x b))]
\end{align*}
\ourcondition{T6}{Each position bit has exactly two children. One of them is its copy node and the other one is:  
\begin{enumerate}[(a)]
\item the next position bit if the current node is not already the last position bit or
\item a node containing one of the propositions $\pose$, $\poso$, $\rowe$, $\rowo$ and $q_{\sharp}$, otherwise.
\end{enumerate}}
\begin{align*}
  \chi_6 = \chi_{6a}\land \chi_{6b}
\end{align*}

\begin{align*} 
\chi_{6a} = \A\G[(o \wedge \neg \philast) \rightarrow  &\E\X(o \wedge \dx. @_{root} \E\F(\E\X x \wedge \A\X(o \rightarrow x))) \wedge \\
                                                                 	    &\E\X(c \wedge \dx. @_{\myroot} \E\F(\E\X x \wedge \A\X(c \rightarrow x))) \wedge  \\
																																	    &\A\X(o \vee c)]
\end{align*}
\begin{align*}
\chi_{6b} = \A\G[(o \wedge \philast) \rightarrow &\E\X(c \wedge \dx. @_{\myroot}\E\F(\E\X x \wedge \A\X(c     			
                                                                \rightarrow x))) \wedge \\
																															 &\E\X(\neg c \wedge \dx. @_{\myroot}\E\F(\E\X x \wedge \A\X(\neg c     																																\rightarrow x))) \wedge \\
																															 &\A\X(c \vee \pos \vee \row \vee q_{\sharp})
\end{align*}
\ourcondition{T7}{With the propositions $\rowe$ and $\rowo$ the counting of the positions of the current row starts. This means that the next position gets the initial position number. Therefore the proposition $b$ is set to false in every position bit of this position.}
\begin{align*}
\chi_7 = \A\G[\row \rightarrow \E\X\A\X\E[\psicur \wedge \G \neg b]]
\end{align*}
\newline
Compared to the increasing of a position bit number the increasing of a position number is a little bit complicated because in the latter case the bits are distributed over several nodes. In addition, we have to account for the case that player $A$ cannot make a move without violating the horizontal or vertical constraints. In this case the game is over and $q_{\sharp}$-nodes follow only. We describe the increasing of a position number in three parts.
\newline
\begin{align*}
  \chi_8 = \chi_{8a}\land \chi_{8b}\land \chi_{8c}
\end{align*}
\ourcondition{T8a}{If the last position is reached then a new row is started or the game is over.}
\begin{align*}
\chi_{8a} &= \A\G[\phifirst \wedge \E[\psicur \wedge \G b] \rightarrow \E[\G \neg \pos \wedge \F(o \wedge \philast \wedge \E\X(\row \vee q_{\sharp}))]]		
\end{align*}
\ourcondition{T8b}{If the last position is not reached and it is the turn of player $E$ then the next position sequence definitely has to be encoded and it gets the next position number.}
\begin{align*}
\chi_{8b} =& \A\G[\phifirst \wedge \E[\G \neg \pos \wedge \F(c \wedge \philast \wedge b) ] \wedge \E[\psicur \wedge \F \neg b ] \rightarrow \theta]
\end{align*}
\textit{Find the highest bit which has to be flipped.} 
\begin{align*}
\theta =& \E[\psicur \wedge \F(o \wedge \neg b \wedge \E\X(\neg c \wedge (o \rightarrow \E[\psicur \wedge \G b])) \wedge \theta')]
\end{align*}
\textit{Flip the same bit in the next position sequence.}
\begin{align*}
\theta' =& \dx.[\theta'' \wedge \E[\psicur \wedge \F(o \wedge \philast \wedge \\ 
&\ \ \ \ \ \ \ \ \ \ \ \ \ \ \ \ \ \ \ \ \ \ \ \ \ \ \ \E\X( \pos \wedge \A\X\E[\psicur \wedge \F(o \wedge \bigwedge_{i=0}^{n-1} (b_i \leftrightarrow @_x b_i) \wedge (b \leftrightarrow @_x b) \wedge \\ 
&\ \ \ \ \ \ \ \ \ \ \ \ \ \ \ \ \ \ \ \ \ \ \ \ \ \ \ \ \ \ \ \ \ \ \ \ \ \ \ \ \ \ \ \ \ \ \ \ \ \ \ \ \ \ \ \ \ \ \ \ \ \E\X(\neg c \wedge (o \rightarrow \E[\psicur \wedge \G \neg b])))]))]] 
\end{align*}
\textit{If the proposition $b$ is satisfied in a position bit preceding the flipped bits then it is also satisfied in the same position bit in the next position sequence.}
\begin{align*}
\theta'' = & @_{\myroot}\E\F[\phifirst \wedge \E[\G\neg \pos \wedge \F x \wedge \F(c \wedge \bigwedge_{i=0}^{n-1} (b_i \leftrightarrow @_x b_i)) \wedge \\
&\ \ \ \ \ \ \ \ \ \ \ \ \ \ \ \ \ \ \ \ \ \ \ \ \ \ \ \G(o \wedge \neg x \rightarrow \A[\psipos \wedge \bigwedge_{i=0}^{n-1} (b_i \leftrightarrow \F(c \wedge b_i)) \rightarrow b \leftrightarrow \F(c \wedge b)])]]
\end{align*}
\ourcondition{T8c}{If the last position is not reached and it is the turn of player $A$ then the next position sequence has not to be encoded but the game could be over.}
\begin{align*}
\chi_{8c} = & \A\G[\phifirst \wedge \E[\G \neg \pos \wedge \F(c \wedge \philast \wedge \neg b)] \rightarrow (\E[\psicur \wedge \F(o \wedge \philast \wedge \E\X q_{\sharp})]\vee \theta)]
\end{align*}
\ourcondition{T9}{Determining whether a position/row has an even or uneven number.}
\begin{align*}
  \chi_9 = \chi_{9a}\land \chi_{9b}
\end{align*}
\begin{align*}
\chi_{9a} = & \A\G[\pos \rightarrow \dx.@_{\myroot}\A\G[\pos \wedge \E\X\E[\psicur \wedge \F(o \wedge \philast \wedge \E\X x)] \rightarrow \\
&\ \ \ \ \ \ \ \ \ \ \ \ \ \ \ \ \ \ \ \ \ \ \ \ \ \ \ \ \ \ \ \ \ \ \ \ \ \ \ \ \ \ \ \ \ \ \ (\pose \leftrightarrow @_x \poso)]] \\ \\
\chi_{9b} = & \A\G[\row \rightarrow \dx.@_{\myroot}\A\G[\row \wedge \E\X\E[\G\neg \row \wedge \F(c \wedge \philast) \wedge \\
&\ \ \ \ \ \ \ \ \ \ \ \ \ \ \ \ \ \ \ \ \ \ \ \ \ \ \ \ \ \ \ \ \ \ \ \ \ \ \ \ \ \ \ \ \ \ \ \ \F(o \wedge \philast \wedge \E\X x)] \rightarrow (\rowe \leftrightarrow @_x \rowo)]] \\
\end{align*}
\ourcondition{T10}{Each move of player $A$ is followed by exactly one counter move of player $E$. Because the even positions correspond to the moves of player $E$, every position node labeled with $\pose$ must have exactly one child.}
\begin{align*}
  \chi_{10} = \A\G[\pose \rightarrow \E\X\dx.\ar\E\F(\E\X x \wedge \A\X x)]
\end{align*}

Now we use the tree structure described above to encode a winning strategy for player $E$.
\newline
\ourcondition{W1}{The game ends after a finite sequence of moves.}
\newline
We express this by postulating that on every rootpath which does not contain any copy nodes a $q_{\sharp}$-node is eventually reached. A $q_{\sharp}$-node is followed only by $q_{\sharp}$-nodes.
\begin{align*}
\psi_1 = & \A(\G \neg c \rightarrow \F q_{\sharp}) \wedge \A\G(q_{\sharp} \rightarrow \A\G q_{\sharp}) 
\end{align*}
\ourcondition{W2}{To each position belongs exactly one tile type.}
\newline
We remember that a tile type $t$ is represented by the proposition $p_t$ in all position bits of a position sequence.
\begin{align*}
\psi_2 = & \A\G[[o \rightarrow \bigvee_{t \in T}(p_t \wedge \bigwedge_{t \neq t'\in T}\neg p_{t'} )] \wedge [o \wedge \neg \philast \rightarrow \bigwedge_{t \in T} (t \leftrightarrow EX(o \wedge t))]]
\end{align*}
\ourcondition{W3}{According to the horizontal constraints, the tile type of every position, except the first one on each row, is consistent with the tile type of the precedent position.}
\begin{align*}
\psi_3 = & \A\G[\phifirst  \rightarrow \\
&\ \ \ \ \ \bigwedge_{t' \in T}[p_{t'} \rightarrow \dx.@_{\myroot}\A\G[\phifirst \wedge \neg x\wedge \E[\F x \wedge \psipos \wedge \G\neg \row]\rightarrow \bigvee_{(t,t')\in H} p_t]]]
\end{align*}
\ourcondition{W4}{According to the vertical constraints, for every row, except the first one, it holds that the tile type of every position on this row is consistent with the tile type of the same position on the precedent row.}
\begin{align*}
\psi_4 =&\A\G[\phifirst \rightarrow \bigwedge_{t' \in T}[p_{t'} \rightarrow \E[\psicur \wedge \F(o \wedge \philast \wedge \dx.@_{\myroot}\A\G[\xi \rightarrow \bigvee_{(t,t')\in V} p_t])]]] \\
\xi = &\phifirst \wedge \E[\F x \wedge \G(\neg c \rightarrow \E\F x) \wedge \psirow]\wedge \\
&\E[\psicur\wedge \G(o \rightarrow \E[\G(\neg c \rightarrow \E\F x)\wedge \F\E[\G\neg \pos \wedge \F x] \wedge \\
& \ \ \ \ \ \ \ \ \ \ \ \ \ \ \ \ \ \ \ \ \ \ \ \ \ \ \ \ \ \bigwedge_{i=0}^{n-1} (b_i \leftrightarrow \F(c \wedge b_i)) \wedge b \leftrightarrow \F(c \wedge b)]) ]
\end{align*}
\ourcondition{W5}{All possible moves of player $A$ are represented in the encoding.}
\begin{align*}
\psi_5 =& \A\G[\phifirst \wedge \E[\psicur \wedge \F(o \wedge \philast \wedge \neg b)] \rightarrow \\
&\ \ \ \ \ \ \bigwedge_{t \in T}(p_t \rightarrow \bigwedge_{(t,t')\in H}(\E[\psicur \wedge \F(o \wedge \philast \wedge \E\X(\pos \wedge \E\X p_{t'}))] \vee \\
&\ \ \ \ \ \ \ \ \ \ \ \ \ \ \ \ \ \ \ \ \ \ \ \ \ \ \ \ \ \E[\psicur \wedge \F(o \wedge \philast \wedge \\
&\ \ \ \ \ \ \ \ \ \ \ \ \ \ \ \ \ \ \ \ \ \ \ \ \ \ \ \ \ \ \ \ \ \ \ \ \ \ \ \ \ \ \ \ \ \dx.@_{\myroot}\E\F(\xi \wedge \\ 
&\ \ \ \ \ \ \ \ \ \ \ \ \ \ \ \ \ \ \ \ \ \ \ \ \ \ \ \ \ \ \ \ \ \ \ \ \ \ \ \ \ \ \ \ \ \ \ \ \ \ \ \ \ \ \ \ \ \ \bigvee_{(t'',t')\notin V}\E[\psicur \wedge \\
&\ \ \ \ \ \ \ \ \ \ \ \ \ \ \ \ \ \ \ \ \ \ \ \ \ \ \ \ \ \ \ \ \ \ \ \ \ \ \ \ \ \ \ \ \ \ \ \ \ \ \ \ \ \ \ \ \ \ \ \ \ \ \ \ \ \ \ \ \ \ \ \ \F(o \wedge \philast \wedge \E\X (p_{t''} \wedge \E\F x))]))]))] 
\end{align*}
\ourcondition{W6}{All tile types in the first row are from the set $F$.}
\begin{align*}
\psi_6 = \A\G[\phifirst \wedge \dx.@_{\myroot}\E\X\E[\G\neg \row \wedge \F x] \rightarrow \bigvee_{t \in F} p_t]
\end{align*}
\ourcondition{W7}{Unless the game terminates prematurely (because player $A$ is not able to make a further move) all tile types in the last row are from the set $L$.}
\begin{align*}
\psi_7 = & \A\G[o \wedge \philast \wedge b \wedge \E\X q_{\sharp} \rightarrow \\
&\ \ \ \ \ \dx.@_{\myroot}\E\F[\row \wedge \E\X\E[\G \neg \row \wedge \F x \wedge \G(\phifirst \rightarrow \bigvee_{t \in L} p_t)]]]
\end{align*}
\newline
\newline
Finally we obtain the formula $\varphi_I$ as a conjunction of the formulas encoding the properties T1-T10 and W1-W7. It should be noticed that a model for $\varphi_I$ can contain multiple possible moves for player $E$ on each position. In this case it suffices to choose one of the suggested moves because every rootpath without copy nodes represents a win for $E$.
}

\ignore{
 First we put the variable $x$ onto the last node of $s$. To ensure that a
position sequence $s'$ of $r'$ corresponds to $s$ we require for each
node $v'$ in $s'$ that there is a path starting in $v'$ which
eventually reaches the copy nodes of some node $v$ on $s$ such that $v'$
and (the copy nodes of) $v$ agree in $b_1,\ldots,b_n,b$. 

It is not hard to construct subformulas $\varphi_{\mathit{first}}$, $\psi_{\mathit{cur}}$ and 
$\mathit{step}_{\mathit{row}}$ describing the following.
\begin{itemize}
\item $\varphi_{\mathit{first}}$: The current node is the first bit of a position.
\item $\psi_{\mathit{cur}}$: The current path starts at a position bit, continues until the last bit of 
the position and follows then the copy nodes of the last bit. 
\item $\mathit{step}_{\mathit{row}}$: The current path contains exactly one of the propositions 
$\rowe$ and $\rowo$. Hence, it connects a row with the next one but not the one after
the next one etc. 
\end{itemize}
\ahmet{Wir sollten hier noch sagen, dass die Kodierung einer Position entlang eines \emph{eindeutigen} o-pfades durchgef\"uhrt wir.}
Then the vertical constraints can be formulated as follows:   
\begin{align*}
&\A\G[\varphi_{\mathit{first}} \rightarrow \bigwedge_{t' \in T}[p_{t'} \rightarrow \E[\psi_{\mathit{cur}} \wedge \F(o \wedge \bigwedge_{i=0}^{n-1}b_i \wedge \dx.@_{\myroot}\A\G[\xi \rightarrow \bigvee_{(t,t')\in V} p_t])]]] \\
&\xi = \varphi_{first} \wedge \E[\F x \wedge \G(\neg c \rightarrow \E\F x) \wedge \mathit{step}_{\mathit{row}}]\wedge \\
& \ \ \ \ \ \E[\psi_{\mathit{cur}}\wedge \G(o \rightarrow \E[\G(\neg c \rightarrow \F x)\wedge \F\E[\G\neg (\pose \vee \poso) \wedge \F x] \wedge \\
& \ \ \ \ \ \ \ \ \ \ \ \ \ \ \ \ \ \ \ \ \ \ \ \ \ \ \ \ \ \bigwedge_{i=0}^{n-1} (b_i \leftrightarrow \F(c \wedge b_i)) \wedge b \leftrightarrow \F(c \wedge b)]) ]
\end{align*} 
}
\fullonly{
\begin{figure}[t]
\begin{center}
\scalebox{0.85}{
\begin{pspicture}(1.0,0.0)(15,2) 
\psframe[framearc=0.5, linestyle=dotted](1.1,-0.4)(4.6,1.3)
\rput(-0.6,0.8){\rnode{00}{$\ldots$}}
\wnode{0.0,0.8}{row1} 
\rput(0.0,0.4){\rnode{00}{$\rowe$}}
\wnode{0.7,0.8}{pos1} 
\rput(0.7,0.4){\rnode{00}{$\pose$}}
\wnode{1.4,0.8}{o1} 
\rput(1.4,0.4){\rnode{00}{\small{$0$}}}
\rput(1.4,0.0){\rnode{00}{\small{$0$}}}
\dashednode{2.3,1.8}{c1} 
\wnode{2.1,0.8}{o2} 
\rput(2.1,0.4){\rnode{00}{\small{$1$}}}
\rput(2.1,0.0){\rnode{00}{\small{$0$}}}
\dashednode{3.0,1.8}{c2} 
\pnode(2.5,0.8){o2d}
\pnode(3.1,0.8){o2e}
\rput(2.8,0.8){\rnode{00}{$\ldots$}}
\wnode{3.5,0.8}{o3} 
\rput(3.3,0.4){\rnode{00}{\small{$2^n{-}2$}}}
\rput(3.5,0.0){\rnode{00}{\small{$0$}}}
\dashednode{4.4,1.8}{c3} 
\wnode{4.2,0.8}{o4} 
\rput(4.2,0.4){\rnode{00}{\small{$2^n{-}1$}}}
\rput(4.2,0.0){\rnode{00}{\small{$0$}}}
\dashednode{5.1,1.8}{c4} 

\psframe[framearc=0.5, linestyle=dotted](5.3,-0.4)(8.8,1.3)
\wnode{4.9,0.8}{pos2} 
\rput(4.9,0.4){\rnode{00}{$\poso$}}
\wnode{5.6,0.8}{o5} 
\rput(5.6,0.4){\rnode{00}{\small{$0$}}}
\rput(5.6,0.0){\rnode{00}{\small{$0$}}}
\dashednode{6.5,1.8}{c5} 
\wnode{6.3,0.8}{o6} 
\rput(6.3,0.4){\rnode{00}{\small{$1$}}}
\rput(6.3,0.0){\rnode{00}{\small{$0$}}}
\dashednode{7.2,1.8}{c6} 
\pnode(6.7,0.8){o6d}
\pnode(7.3,0.8){o6e}
\rput(7.0,0.8){\rnode{00}{$\ldots$}}
\wnode{7.7,0.8}{o7} 
\rput(7.5,0.4){\rnode{00}{\small{$2^n{-}2$}}}
\rput(7.7,0.0){\rnode{00}{\small{$0$}}}
\dashednode{8.6,1.8}{c7} 
\wnode{8.4,0.8}{o8} 
\rput(8.4,0.4){\rnode{00}{\small{$2^n{-}1$}}}
\rput(8.4,0.0){\rnode{00}{\small{$1$}}}
\dashednode{9.3,1.8}{c8} 
\wnode{9.1,0.8}{pos3} 
\rput(9.1,0.4){\rnode{00}{$\pose$}}
\pnode(9.5,0.8){pos3d}
\pnode(10.8,0.8){pos3e}
\rput(9.8,0.8){\rnode{00}{$\ldots$}}
\rput(10.5,0.8){\rnode{00}{$\ldots$}}

\psframe[framearc=0.5, linestyle=dotted](11.6,-0.4)(14.5,1.3)

\wnode{11.2,0.8}{pos4} 
\rput(11.2,0.4){\rnode{00}{$\poso$}}
\wnode{11.9,0.8}{o9} 
\rput(11.9,0.4){\rnode{00}{\small{$0$}}}
\rput(11.9,0.0){\rnode{00}{\small{$1$}}}
\dashednode{12.8,1.8}{c9} 
\wnode{12.6,0.8}{o10} 
\rput(12.6,0.4){\rnode{00}{\small{$1$}}}
\rput(12.6,0.0){\rnode{00}{\small{$1$}}}
\dashednode{13.5,1.8}{c10} 
\pnode(13.0,0.8){o10d}
\pnode(13.6,0.8){o10e}
\rput(13.3,0.8){\rnode{00}{$\ldots$}}
\wnode{14.0,0.8}{o11} 
\rput(14.0,0.4){\rnode{00}{\small{$2^n{-}1$}}}
\rput(14.0,0.0){\rnode{00}{\small{$1$}}}
\dashednode{14.9,1.8}{c11} 

\wnode{14.9,0.8}{row2} 
\rput(14.9,0.4){\rnode{00}{$\rowo$}}
\wnode{15.6,0.8}{pos5} 
\rput(15.7,0.4){\rnode{00}{$\pose$}} 

\rput(16.4,0.8){\rnode{00}{$\ldots$}}


\ncline{->}{row1}{pos1}
\ncline{->}{pos1}{o1}
\ncline{->}{o1}{o2}
\ncline{->}{o2}{o2d}
\ncline{->}{o2e}{o3}
\ncline{->}{o3}{o4}
\ncline{->}{o4}{pos2}
\ncline{->}{pos2}{o5}
\ncline{->}{o5}{o6}
\ncline{->}{o6}{o6d}
\ncline{->}{o6e}{o7}
\ncline{->}{o7}{o8}
\ncline{->}{o8}{pos3}
\ncline{->}{pos3}{pos3d}
\ncline{->}{pos3e}{pos4}
\ncline{->}{pos4}{o9}
\ncline{->}{o9}{o10}
\ncline{->}{o10}{o10d}
\ncline{->}{o10e}{o11}
\ncline{->}{o11}{row2}
\ncline{->}{row2}{pos5}

\ncline{->}{o1}{c1}
\ncline{->}{o2}{c2}
\ncline{->}{o3}{c3}
\ncline{->}{o4}{c4}
\ncline{->}{o5}{c5}
\ncline{->}{o6}{c6}
\ncline{->}{o7}{c7}
\ncline{->}{o8}{c8}
\ncline{->}{o9}{c9}
\ncline{->}{o10}{c10}
\ncline{->}{o11}{c11}
\end{pspicture}
}
\end{center}

\label{rowwinStrategy}
\caption{A row sequence in the encoding of a winning strategy for
  player $E$. The upper nodes are copy nodes.}
\end{figure}
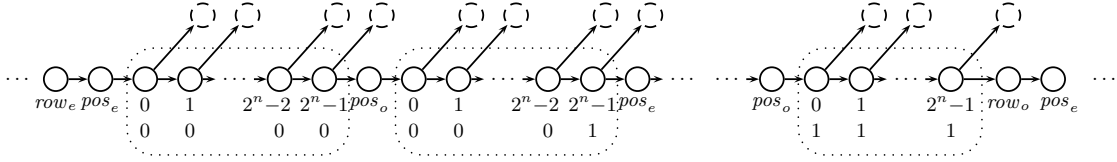
}

\fullonly{
\begin{figure}[t]
\begin{center}
\includegraphics[width=10cm]{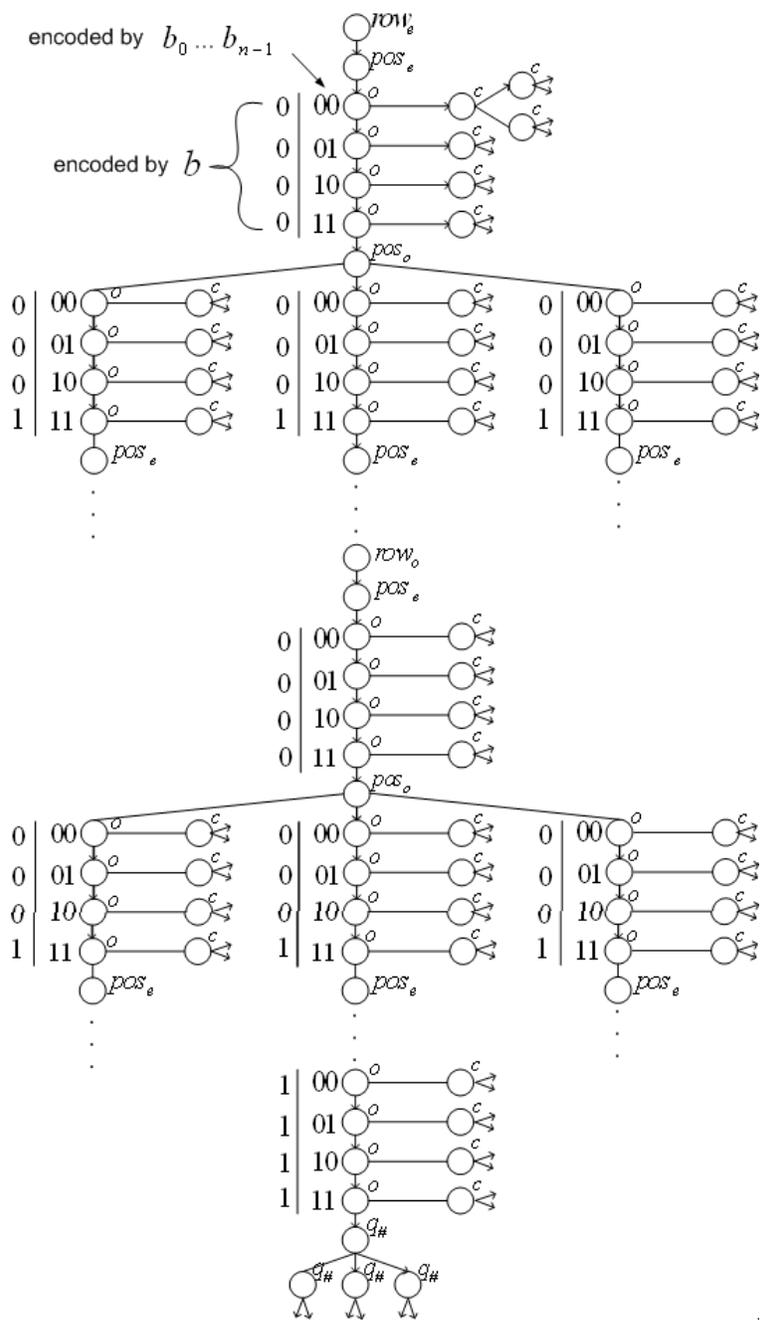}
\end{center}

\caption{An encoding of a winning strategy for
  player $E$ in the case where $n=2$.}
\label{FullStrategy}
\end{figure}
}
\qed
\end{proof}

We can obtain, by simple instantiation, a consequence of this lower complexity bound which will
be useful later on in proving the exponential succinctness of \Honeplus in \Hone.

\begin{corollary}
\label{cor:deepmodels}
There are finitely satisfiable \Honeplus formulas $\varphi_n$, $n \in \Nat$, of size $\bigO(n)$ s.t.\ 
every tree model $\mathcal{T}_n$ of $\varphi_n$ has height at least $2^{2^{2^n}}$.
\end{corollary}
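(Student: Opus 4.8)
The plan is to instantiate the reduction of Theorem~\ref{compl-lower-bound} with a carefully chosen family of tiling instances $I_n$ whose winning strategies for player $E$ are forced to last triply-exponentially many rows. Recall that the reduction produces, from an instance $I=(T,H,V,F,L,n)$, a formula $\varphi_I$ of size $\bigO(|I|\cdot|T|)$ that is satisfiable iff $E$ has a winning strategy, and that every model of $\varphi_I$ encodes such a strategy as a tree in which each root path (ignoring copy nodes) is a play won by $E$, each board row being represented by a row sequence. Since a board has $2^{2^n}$ columns, if I can force every legal play to run for roughly $2^{2^{2^n}}$ rows, then every model must contain a path through that many row sequences, and hence has height at least $2^{2^{2^n}}$.

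To force this, I would let the tiles implement a binary counter laid out across the columns: each column carries one bit, so the $2^{2^n}$ columns store a $2^{2^n}$-bit number, which ranges over $2^{2^{2^n}}$ values. I use a fixed, constant-size tile set $T$ (bit value together with the usual carry markers), horizontal constraints $H$ that propagate carries within a row, and vertical constraints $V$ that force the number in each row to be the successor of the number in the row below. The set $F$ forces the first row to encode $0$, and $L$ is chosen so that a row consists entirely of $L$-tiles exactly when it encodes the maximal value $2^{2^{2^n}}-1$. With these constraints the tile at every position is uniquely determined by the tiles already placed, so once the first row is fixed the whole tiling is deterministic; player $E$ therefore has a winning strategy, namely always to place the forced tile, and any deviation by $A$ violates a constraint and makes $E$ win immediately. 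Because the counter increments monotonically from $0$, the all-$L$ row occurs only after $2^{2^{2^n}}$ rows have been completed, and never earlier. Note that $|T|=\bigO(1)$ and the only growing part of $I_n$ is the unary parameter $n$, so $|I_n|=\bigO(n)$ and thus $|\varphi_{I_n}|=\bigO(n)$; finiteness of the terminating strategy tree gives finite satisfiability.

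Putting it together, I set $\varphi_n:=\varphi_{I_n}$. By Theorem~\ref{compl-lower-bound} this formula is finitely satisfiable of size $\bigO(n)$, and every model $\calT_n$ encodes a winning strategy for $E$ in $I_n$. Every root path of $\calT_n$ that avoids copy nodes represents a play, and by the forcedness of the tiling every such play completes $2^{2^{2^n}}$ rows before reaching the final all-$L$ row. Each completed row is represented by a full row sequence, contributing at least one node to the path; hence $\calT_n$ contains a path of length at least $2^{2^{2^n}}$, i.e.\ its height is at least $2^{2^{2^n}}$, as required.

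The main obstacle is the design and correctness of the counter instance $I_n$: I must verify that $H$ and $V$ determine the tile at each position uniquely (so that the play is deterministic and $A$ can neither prolong the game indefinitely nor force $E$ into a forbidden move), that the start and accept conditions single out precisely the $0$-row and the maximal row, and that only a constant number of tile types is used so the bound $|\varphi_n|=\bigO(n)$ holds. Once uniqueness and termination-after-$2^{2^{2^n}}$-rows are established, the height lower bound follows immediately from the row-sequence encoding of the reduction. Counter-by-tiling constructions of this kind are standard, so I expect this step to be routine but slightly tedious to verify in full.
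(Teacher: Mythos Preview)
Your proposal is correct and follows essentially the same route as the paper: instantiate the reduction of Theorem~\ref{compl-lower-bound} with a constant-size tile set that implements a $2^{2^n}$-bit binary counter across the columns, with $F$ forcing row~$0$ to encode $0$, $H$/$V$ implementing successor, and $L$ singling out the maximal value, so that any winning play for $E$ needs $2^{2^{2^n}}$ rows. The paper's full version carries out exactly this plan with the concrete tile set $T=\{0,1\}\times\{\blow,\bflip,\bstay\}$ (bit value plus a carry/flip marker), which matches your ``bit value together with the usual carry markers'' description.
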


\begin{proof} 
\mfcsorfull{ 
It is not difficult to construct instances $I_n$, $n \in \Nat$, of the 2EXP-tiling game 
with $|I| = \bigO(n)$ over a set $T$ of tiles with $|T| = \bigO(1)$ such that
player $E$ has a winning strategy and any successful tiling of the $2^{2^n}$-corridor requires 
$2^{2^{2^n}}$ rows. In order to achieve this, one encodes bits using tiles and forms the constraints in a
way that enforces the first row to encode the number $0$ in binary of length $2^{2^n}$, and each other row to
encode the successor in the natural number of the preceding row, while winning requires the number 
$2^{2^{2^n}}$ to be reached. The construction in the proof of Thm.~\ref{compl-lower-bound} then maps each 
such $I_n$ to a formula $\varphi_n$ of
size $\bigO(n)$ that is finitely satisfiable such that every finite model $\mathcal{T}_n$ of $\varphi_n$ 
encodes a winning strategy for player $E$ in the $2^{2^n}$-tiling game. Such a strategy will yield a 
successful tiling of the $2^{2^n}$-corridor for any counterstrategy of player $A$, and any such tiling is 
encoded on a path of $\mathcal{T}_n$ which contains each row of length $2^{2^n}$ as a segment of which 
there are $2^{2^{2^n}}$ many. Thus, $\mathcal{T}_n$ has to have height at least $2^{2^n}\cdot 2^{2^{2^n}}$.
}
{ 
Consider the following instances of the 2EXP-tiling game: $I_n := (T,H,V,F,L,n)$ where $T = \{0,1\}
\times \{\blow,\bflip,\bstay\}$. Tiles are supposed to model bit values in their first component. The second
component describes whether or not a bit has to be \emph{flipped} (value $\bflip$) or remains the
\emph{same} (value $\bstay$) in the increase of a value encoded in binary through these bits. The value
$\blow$ is used to mark the \emph{lowest} bit in a number. Remember that in binary increase the lowest
bit always gets flipped whereas the flipping of any other bit is determined by the value of itself,
the value of the next lower bit and the question whether or not that bit gets flipped. We denote a
tile as $0^\blow$ for instance rather than $(0,\blow)$.

$I_n$ is constructed in a way that forces both players to put down the number $0$ in binary coding
into the first row of the $2^{2^n}$-arena and, whenever a row encodes a number $i$, then the players
need to place the binary encoding of $i+1$ into the next row. Thus, there will be (almost) no choices
for the players. Player $E$ should win when the highest possible number $2^{2^{2^n}}$ is placed in a
row. 

The starting constraints are $F := \{ 0^\blow, 0^\bstay \}$. Hence, the first row
must encode $0$. The horizontal constraints are as follows.
\begin{displaymath}
H \enspace := \enspace \{ (0^\blow,b^\bstay), (1^\blow,b^\bflip), (0^\bflip,b^\bstay), (1^\bflip,b^\bflip), 
(0^\bstay,b^\bstay), (1^\bstay,b^\bstay) \}
\end{displaymath}
where $b$ is, in any case, an arbitrary value in $\{0,1\}$.

Now note that with this $H$ and $F$, there are only two possible first rows that the players can lay down:
$0^\blow 0^\bstay \ldots 0^\bstay$ or $0^\bstay \ldots 0^\bstay$, and it is player $E$ who determines 
entirely through his first choice which of these it is going to be.

Next we will translate the informal description of binary increase given above into the vertical
constraints.
\begin{displaymath}
V \enspace := \enspace \{ (0^\blow,1^\blow), (1^\blow,0^\blow), (0^\bflip,1^x), (1^\bflip,0^x), (0^\bstay,0^x), 
(1^\bstay,1^x) \}
\end{displaymath}
where $x$ is, in any case this time, an arbitrary value in $\{\bflip,\bstay\}$.

Now note that, if a row $i$ is layed down entirely, then the vertical constraints determine uniquely the
bit value of each tile in the next row. Furthermore, if the first tile in row $i$ is of the form 
$b^\blow$ then the first tile in row $i+1$ is uniquely determined to be $(1-b)^\blow$, and $H$ as well
as the bit values in row $i+1$ uniquely determine the flip-values of all the bits in row $i+1$. Furthermore,
all lowest bits that are all $1$ in row $i$ are $0$ in row $i+1$, the lowest bit that is $0$ in row $i$ is
$1$ in row $i+1$, and all other bits in row $i+1$ retain their value from row $i$. Hence, if row $i$
encodes the number $i$ in binary (starting with $0$), then row $i+1$ encodes the number $i+1$ in binary.

Thus, if player $E$ chooses tile $0^\blow$ as the first one, then both players have no choice but to
lay down the binary encodings of $0,1,2,\ldots$. On the other hand, if player $E$ chooses tile $0^\bstay$
as the first one then all rows will encode the number $0$ because the entire arena must be tiled with
$0^\bstay$ only. 

Finally, remember that the goal is to construct $I_n$ in a way that enforces a play filling $2^{2^{2^n}}$ many
rows. This can now easily be achieved by constuction the end constraints in a way that player $E$ only
wins when the number $2^{2^{2^n}}-1$ has been placed down in a row. We therefore set $L := \{ 1^\blow, 1^\bflip \}$.
Note that in the successive increase as constructed above, the last row cannot contain the tile $1^\bstay$. 
}
\qed
\end{proof}


Using the ideas of the transformation mentioned in Theorem \ref{sameExpPower} we can show that the lower bound for \Honeplus is optimal.
Even for strictly more expressive logics than \Honeplus the satisfiability problem remains in \Texptime.
\begin{theorem} \label{complUpperBound}
The satisfiability problem for \Honeplus is \Texptime-complete.
\end{theorem}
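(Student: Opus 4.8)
The lower bound is already in hand: by Theorem~\ref{compl-lower-bound} satisfiability of \Honeplus is \Texptime-hard, so the only thing left to establish is membership in \Texptime. The plan is to avoid developing a dedicated decision procedure for \Honeplus and instead reduce its satisfiability problem to that of \Hone, for which a \Dexptime algorithm is already known \cite{Weber09}. The whole argument then rests on the observation that the translation cost is only singly exponential, so feeding its output into a doubly-exponential procedure keeps us inside a triple exponential.

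In detail, given a \Honeplus formula $\varphi$ of size $n$, I would first invoke the translation underlying Theorem~\ref{sameExpPower}, quantified in Corollary~\ref{succUpperBound}, to obtain an equivalent \Hone formula $\varphi'$ with $\size{\varphi'} = 2^{\bigO(n\log n)}$. Since $\varphi \equiv \varphi'$, the formula $\varphi$ is satisfiable iff $\varphi'$ is, so it suffices to decide satisfiability of $\varphi'$. For that I would apply the result of \cite{Weber09} that \Hone satisfiability is decidable in \Dexptime, i.e.\ in time $2^{2^{\mathrm{poly}(m)}}$ on an input of size $m$. Substituting $m = \size{\varphi'} = 2^{\bigO(n\log n)}$ and using that a polynomial of a singly-exponential quantity is again singly exponential, the running time becomes $2^{2^{\mathrm{poly}(2^{\bigO(n\log n)})}} = 2^{2^{2^{\bigO(n\log n)}}}$, which is triply exponential in $n$ because $\bigO(n\log n)$ is polynomial in $n$.

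The step that I expect to require the most care is not an algorithmic one but the complexity bookkeeping just sketched: one must check that the exponential translation blowup adds exactly one level to the tower of exponentials rather than two. This is precisely because $2^{\bigO(n\log n)}$ is itself only singly exponential in $n$ (indeed $2^{n\log n}=n^{n}\le 2^{n^{2}}$), so composing it with the doubly-exponential \Hone procedure lands in \Texptime and not higher. Finally, I would remark that this reduction is robust: any extension of \Honeplus that still admits an equivalence-preserving translation into \Hone of at most exponential size inherits the \Texptime upper bound verbatim, which accounts for the claim that even strictly more expressive variants remain in \Texptime.
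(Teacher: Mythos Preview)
Your core argument is correct and matches the paper's approach: exponential translation from \Honeplus to \Hone via Corollary~\ref{succUpperBound}, followed by the \Dexptime decision procedure for \Hone, yields the \Texptime upper bound.

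Your final remark, however, does not work. You claim that robustness under ``equivalence-preserving translation into \Hone of at most exponential size'' accounts for the \Texptime bound on \emph{strictly more expressive} variants, but this is self-contradictory: a logic strictly more expressive than \Hone (such as the extension by $\Finfty$, which by Theorem~\ref{theo:fairness} is not expressible even in \HCTLplus) by definition admits no equivalence-preserving translation into \Hone at all. The paper handles the stronger logic \HoneplusPastFair differently: it adds further rewrite rules for the past and fairness operators, and the elimination of $\Finfty$ introduces fresh propositions so that the resulting translation is only \emph{satisfiability}-preserving, not equivalence-preserving. Moreover the target is not \Hone but \HonePast (i.e., \Hone with $\Y$ and $\Since$), whose satisfiability is \Dexptime by \cite{Weber07}. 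So while your proof of the theorem as stated is fine, the extension to more expressive logics requires the additional ingredients the paper supplies, not merely the robustness you describe.
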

\begin{proof}
\mfcsorfull{ 
  The lower bound follows from Thm.~\ref{compl-lower-bound}. The upper bound of \Texptime \ also
  holds when \Honeplus is extended by the fairness operators $\Finfty$ and $\Ginfty$ and the operators
  $\Y$ (previous) and $\Since$ (since) \cite{KupfermanP95} which are the past counterparts of $\X$ and $\U$. 
  The proof is by an exponential reduction to the satisfiability problem of \Hone
  extended by $\Y$ and $\Since$ which is \Dexptime-complete \cite{Weber07}. It should be noted
  that because of Thm.~\ref{theo:fairness} the extension of \Honeplus by $\Finfty$ yields a strictly
  more expressive logic. 
  \qed
}
{ 
Let \HoneplusPastFair be the logic \Honeplus augmented with the operators $\Y$, $\Since$, $\Finfty$ and $\Ginfty$ and similarly \HonePast be \Hone augmented with $\Y$ and $\Since$.  
We describe how to transform a \HoneplusPastFair-formula $\varphi$
into an satisfiability equivalent \HonePast-formula $\varphi'$. The
transformation algorithm we use yields an exponential blowup in the
formula length. As satisfiability of \HonePast is in \Dexptime
\cite{Weber07}  we get the desired \Texptime upper bound for satisfiability of \HoneplusPastFair.
 
The transformation algorithm uses the equivalences already used in the
proof of Theorem \ref{sameExpPower} plus additional equivalences to
deal with the extra operators. For convenience of the reader, we state
all equivalences in the following.  
\begin{enumerate}[(1)]
\item $\neg \X \varphi \equiv \X \neg \varphi$
\item $\neg \Y \varphi \equiv \Y \neg \varphi$
\item $\neg (\varphi \U \varphi') \equiv [(\varphi \wedge \neg \varphi') \U (\neg \varphi \wedge \neg \varphi')] \vee \G \neg \varphi'$
\item $\neg (\varphi \Since \varphi') \equiv (\varphi \wedge \neg \varphi') \Since (\neg \varphi \wedge \neg \varphi')$
\item $\neg \Ginfty \varphi \equiv \Finfty \neg \varphi$
\item $\E(\psi \vee \psi') \equiv \E \psi \vee \E \psi'$
\item $\X\varphi \wedge \X \varphi' \equiv \X(\varphi \wedge \varphi')$
\item $\Y\varphi \wedge \Y \varphi' \equiv \Y(\varphi \wedge \varphi')$
\item $\G\varphi \wedge \G \varphi' \equiv \G(\varphi \wedge \varphi')$
\item $\Ginfty \varphi \wedge \Ginfty \varphi' \equiv \Ginfty(\varphi \wedge \varphi')$
\item Extraction of past operators
\begin{align*}
&\E[\bigwedge_{i=1}^{k} \Y \varphi_i \wedge \bigwedge_{i=1}^{l} (\psi_i \Since \psi'_i) \wedge \X \chi \wedge \G \xi \wedge \Ginfty \rho \wedge
\bigwedge_{i=1}^{m} (\eta_i \U \eta'_i) \wedge \bigwedge_{i=1}^{n} \Finfty \kappa_i \wedge \bigwedge_{i=1}^{o} \neg \Finfty \lambda_i] \\
&\equiv \\
&\bigwedge_{i=1}^{k} \Y \varphi_i \wedge \bigwedge_{i=1}^{l} (\psi_i \Since \psi'_i) \wedge
\E[\X \chi \wedge \G \xi \wedge \Ginfty \rho \wedge \bigwedge_{i=1}^{m} (\eta_i \U \eta'_i) \wedge \bigwedge_{i=1}^{n} \Finfty \kappa_i \wedge \bigwedge_{i=1}^{o} \neg \Finfty \lambda_i]
\end{align*}
\item Elimination of the $\X$-operator
\begin{align*}
&\E[\X \varphi \wedge \G \psi \wedge \Ginfty \chi \wedge \bigwedge_{i=1}^{l} (\xi_i \U \xi'_i) \wedge \bigwedge_{i=1}^{m} \Finfty \kappa_i \wedge \bigwedge_{i=1}^{n} \neg \Finfty \lambda_i] \\
&\equiv \\
&\bigvee_{I \subseteq \{1,...,l\}}[\bigwedge_{i \in I}\xi'_i \wedge \psi \wedge \bigwedge_{i \notin I} \xi_i \wedge \E\X(\varphi \wedge \E[\G \psi \wedge \Ginfty \chi \bigwedge_{i \notin I} (\xi_i \U \xi'_i)\wedge \bigwedge_{i=1}^{m} \Finfty \kappa_i \wedge \bigwedge_{i=1}^{n} \neg \Finfty \lambda_i])]
\end{align*}
\item Disjunction over all possible sequences in which the formulas $\xi'_i$ with $1 \leq i \leq l$ can occur
\begin{align*} 
&\E[\G \psi \wedge \Ginfty \chi \wedge \bigwedge_{i=1}^l (\xi_i \U \xi'_i) \wedge \bigwedge_{i=1}^{m} \Finfty \kappa_i \wedge \bigwedge_{i=1}^{n} \neg \Finfty \lambda_i] \\
&\equiv \\
& \bigvee_{\pi \in Perm(\{1,...,n\})}[E[(\bigwedge_{i=1}^n \xi_i \wedge \psi)\U(\xi'_{\pi(1)} \wedge \E[(\bigwedge_{i \not= \pi(1)}\xi_i \wedge \psi)\U(\xi'_{\pi(2)}\wedge \\
& \ \ \ \ \ \ \ \ \ \ \ \ \ \ \E[(\bigwedge_{i \not= \pi(1),\pi(2)}\xi_i \wedge \psi)\U(\xi'_{\pi(3)}\wedge ...\U(\xi'_{\pi(n)} \wedge \E[\G \psi \wedge \Ginfty \chi \wedge \bigwedge_{i=1}^{m} \Finfty \kappa_i \wedge \bigwedge_{i=1}^{n} \neg \Finfty \lambda_i])...)])])]]
\end{align*}
\item Elimination of the $\Ginfty$-operator
\begin{align*}
\E[\G \psi \wedge \Ginfty \chi \wedge \bigwedge_{i=1}^{m} \Finfty \kappa_i \wedge \bigwedge_{i=1}^{n} \neg \Finfty \lambda_i] 
\equiv   
\E[\psi \U (\E[\G(\psi \wedge \chi) \wedge \bigwedge_{i=1}^{m} \Finfty \kappa_i \wedge \bigwedge_{i=1}^{n} \neg \Finfty \lambda_i])]
\end{align*}
\item Elimination of $\neg \Finfty$
\begin{align*}
\E[\G\psi \wedge \bigwedge_{i=1}^{m} \Finfty \kappa_i \wedge \bigwedge_{i=1}^{n} \neg \Finfty \lambda_i]
\equiv
\E[\psi \U (\E[\G(\psi \wedge \bigwedge_{i=1}^{n} \neg \lambda_i) \wedge \bigwedge_{i=1}^{m} \Finfty \kappa_i])]
\end{align*}
\item Elimination of the $\Finfty$-operator
\begin{align*}
\E[\G\varphi \wedge \bigwedge_{i=1}^{m} \Finfty \kappa_i]
\end{align*}
\centerline{is satisfiable if and only if}
\begin{align*}
\bigwedge_{i=1}^{m} \A\G( \neg \E\G(p_i \wedge \neg \kappa_i)) \wedge \E\G(\varphi \wedge \bigwedge_{i=1}^{m} p_i)
\end{align*}
\centerline{is satisfiable}   
\end{enumerate}
Note that in (16) the two formulas are equivalent only with respect to
satisfiability. For every formula $\kappa_i$, the new formula uses an
additional proposition $p_i$ which is supposed to hold on all paths
satisfying $\Finfty\kappa_i$. 

Let $\varphi$ be a \HoneplusPastFair-formula. We can assume that
$\varphi$ does not contain the path quantifier $\A$  because of
$\A\psi \equiv \neg \E \neg \psi$. In a bottom up fashion the
algorithm replaces each subformula $\E\psi$ of $\varphi$ by a
\HonePast-formula. Thus, it only remains to describe how to transform
a formula $\E\psi$ where $\psi$ is a boolean combination of path
formulas of the form $\Y \chi$, $\chi \Since \chi'$, $\X \chi$, $\chi
\U \chi'$, $\Finfty \chi$ and $\Ginfty \chi$ with $\chi \in
\HonePast$. The transformation involves the following steps:
\begin{itemize}
\item[-] Using De Morgan's laws the $\neg$-operators are pushed to the
  leaves of the Boolean combination.
\item[-] By applying equivalences (1)-(5) negations can be
  eliminated from the outermost Boolean combination and a formula
  $\E\psi$  is obtained in which $\psi$ is a positive Boolean
  combination of path formulas of the form $\Y \chi$, $\chi \Since \chi'$, $\X \chi$, $\G\chi$,$\chi \U \chi'$, $\Finfty \chi$, $\neg \Finfty \chi$ and $\Ginfty \chi$ 
\item By applying equivalences (6)-(10) the formula can be transformed
  into a formula of the form
\newline
$\E[\bigwedge_{i=1}^{k} \Y \varphi_i \wedge \bigwedge_{i=1}^{l} (\psi_i \Since \psi'_i) \wedge \X \chi \wedge \G \xi \wedge \Ginfty \rho \wedge
\bigwedge_{i=1}^{m} (\eta_i \U \eta'_i) \wedge \bigwedge_{i=1}^{n} \Finfty \kappa_i \wedge \bigwedge_{i=1}^{o} \neg \Finfty \lambda_i]$. 
\item Eventually, applying equivalences (11) - (16) a
  \HonePast-formula $\varphi'$ is obtained which is satisfiable if and only if $\varphi$ is satisfiable. 
\end{itemize}
It can be shown that the factorial blowup in equivalence (7) is the worst blowup in the transformation algorithm \cite{EmersonH82}. 
As $n! = 2^{\bigO(n\ log\ n)}$ we can conclude that  $|\varphi'|$ is at most exponential in $|\varphi|$.   
}
\qed
\end{proof}


\section{The Succinctness of \Honeplus w.r.t.\ \Hone} \label{sec:succinctness}
In Corollary \ref{succUpperBound} an upper bound of $2^{\bigO (n \log n)}$ for the succinctness of \Honeplus in \Hone is given.
In this section we establish the lower bound for the succinctness between the two logics.  
Actually we show that \Honeplus is exponentially more succinct than \Hone. The
model-theoretic approach we use in the proof is inspired by \cite{ipl-ctlplus08}. 
We first establish a
kind of small model property for \Hone. 
\begin{theorem}
\label{thm:shallowmodels}
Every finitely satisfiable \Hone-formula $\varphi$ with $\size{\varphi}= n$ has a model of depth $2^{2^{\bigO(n)}}$.
\end{theorem}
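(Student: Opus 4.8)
The plan is to establish a \emph{small-height model property} by a type-based contraction (pumping) argument on paths, which then combines with Corollary~\ref{cor:deepmodels} to yield the succinctness lower bound. Starting from a finite model $\calT$ of $\varphi$, I would first bring $\varphi$ into \E-normal form and work with its Fischer--Ladner closure $\mathrm{cl}(\varphi)$, whose cardinality is $\bigO(n)$. The goal is to show that whenever $\calT$ has a root-to-leaf path longer than a bound $N$ depending on the number of node types, $\calT$ can be shrunk to a strictly smaller model of $\varphi$; taking a model of minimal total size then yields one whose height is bounded by $N$ times the number of eventualities, i.e.\ $2^{2^{\bigO(n)}}$.

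The heart of the argument is the right notion of \emph{type} for a node, and this is exactly where the second exponential enters. For a node $u$ carrying the variable, its behaviour is summarised by a \emph{self-type}: the set of formulas $\psi\in\mathrm{cl}(\varphi)$ with $\calT,u,u\models\psi$; there are at most $2^{\bigO(n)}$ such self-types. Because \Hone has the single variable $x$ and the jumps $\ax$ and $\ar$, the truth of a subformula at a node $v$ under an assignment $x\mapsto u$ depends on $u$ only through the self-type of $u$ and the order relationship of $u$ and $v$ (equal, proper ancestor, proper descendant, incomparable, or root), of which there are only constantly many. I would therefore define the \emph{full type} of $v$ to be the function assigning, to each of the $\bigO(1)\cdot 2^{\bigO(n)}$ possible abstract variable contexts, the set of formulas of $\mathrm{cl}(\varphi)$ holding at $v$ in that context. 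The number of such functions is $\bigl(2^{\bigO(n)}\bigr)^{2^{\bigO(n)}}=2^{2^{\bigO(n)}}$, the desired bound $N$ on the number of types.

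Next I would prove a \emph{contraction lemma}: if a node $v$ is a proper ancestor of a node $w$ on some path, $v$ and $w$ carry the same full type, and no eventuality is first witnessed strictly between them, then grafting the subtree rooted at $w$ in place of $v$ yields a model of $\varphi$ with fewer nodes. Correctness is by induction over $\mathrm{cl}(\varphi)$, using that equal full types guarantee identical truth values in every abstract variable context. The two hybrid operators are handled precisely by this context-indexing: an $\ar$-jump is unaffected since the root lies above $v$ and is never removed; a $\dx$-binding followed by an $\ax$-reference below the excised segment keeps the same ancestor relationship and the same self-type for the variable node, so the indexed truth values agree. With the contraction lemma in hand, a model of minimal size cannot contain a type repetition on a path that fails to progress an eventuality, so every path has length at most $N\cdot(|\mathrm{cl}(\varphi)|+1)=2^{2^{\bigO(n)}}$, giving the claimed height.

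The main obstacle is making the type notion simultaneously \emph{finite} and \emph{contraction-invariant} in the presence of the binder and the two kinds of jumps: a single variable may be bound above a segment of a path and tested or jumped to below it, and $\ar$ lets a subformula escape arbitrarily far upwards, so a naive propositional type (as for plain \CTL) is not preserved by grafting. Reducing the abstraction of the variable node to its self-type plus a constant amount of relational information---thereby capping the number of types at $2^{2^{\bigO(n)}}$ rather than something larger---is the delicate step; the eventuality bookkeeping needed to keep until-witnesses short during contraction is then routine, following the \CTL template refined in \cite{ipl-ctlplus08}.
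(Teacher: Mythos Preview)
The paper's own proof is different and much shorter: it simply invokes the construction from \cite{Weber07} of a nondeterministic B\"uchi tree automaton $A_\varphi$ with $2^{2^{\bigO(n)}}$ states that is equivalent (up to a satisfiability-preserving relabelling) to $\varphi$, and then observes that a standard pumping argument on accepting runs bounds the depth of a smallest accepted finite tree by the number of states. Your direct type-and-contraction argument would be an interesting alternative, but the central abstraction step has a genuine gap.

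The claim that ``the truth of a subformula at a node $v$ under an assignment $x\mapsto u$ depends on $u$ only through the self-type of $u$ and the order relationship of $u$ and $v$'' is false. Take $\psi=\E\X(p\land\E\F x)\in\mathrm{cl}(\varphi)$ for, say, $\varphi=\E\F\,\dx\,\ar\,\E\F\big(s\land\E\X(p\land\E\F x)\big)$, and a node $v$ with a $p$-child $a$ and a non-$p$-child $b$ that itself carries $s$ and has a $p$-child, with $s$ also at $v$. One can choose leaves $u_1$ below $a$ and $u_2$ below $b$ with identical self-types---in particular the only $\ar$-formula in the closure, $\ar\,\E\F(s\land\psi)$, is true at both $(u_1,u_1)$ and $(u_2,u_2)$, via the $s$-node $v$ for $u_1$ and the $s$-node $b$ for $u_2$---yet $\psi$ holds at $(v,u_1)$ and fails at $(v,u_2)$. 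Structurally, any induction on $\psi$ breaks at the temporal operators: stepping from $v$ to one of its children does not preserve your five-way order relationship, since ``$u$ is a proper descendant of $v$'' does not determine in which child's subtree $u$ lies. Hence your ``full type'' is not a well-defined function from abstract contexts to formula sets, and the contraction lemma cannot even be formulated as stated.

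A related issue is the treatment of $\ar$ under grafting. After a $\dx$-binding at some $u$ below $w$ followed by $\ar$, the subformula is evaluated at $(r,u)$, and the tree below $r$ \emph{has} changed (the segment and its side subtrees are gone), so ``the root lies above $v$ and is never removed'' is not sufficient; you must argue that the change is invisible to every such subformula for every such $u$, and the self-type of $u$ alone does not give you that. The automaton of \cite{Weber07} packages exactly this interaction into its doubly-exponential state space; a correct direct type argument would have to mirror that encoding far more closely than your sketch does.
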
     
\begin{proof}
In \cite{Weber07} it was shown that for every \Hone-formula $\varphi$, an equivalent
nondeterministic B\"uchi tree automaton $A_{\varphi}$ with $2^{2^{\bigO(\size{\varphi})}}$
states can be constructed. It is easy to see by a pumping argument
that if $A_{\varphi}$ accepts some finite tree at all, it accepts one of depth
$2^{2^{\bigO(\size{\varphi})}}$.
It should be noted that the construction in \cite{Weber07} only
constructs an automaton that is equivalent to $\varphi$ with respect to
satisfiability. However, the only non-equivalent transformation
step is from $\varphi$ to a formula $\varphi'$ without nested
occurrences of the $\downarrow$-operator (Lemma 4.3 in
\cite{Weber07}). It is easy to see that this step only affects the
propositions of models but not their shape let alone depth. 
\qed
\end{proof}

Corollary \ref{cor:deepmodels} and Theorem \ref{thm:shallowmodels}
together immediately yield the following.

\begin{corollary}
\Honeplus is exponentially more succinct than \Hone.
\end{corollary}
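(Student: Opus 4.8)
The plan is to derive the statement as an immediate consequence of the two preceding results through a contradiction against the definition of succinctness. Recall that \Honeplus is exponentially more succinct than \Hone precisely when \Honeplus is \emph{not} $h$-succinct in \Hone for any $h \in 2^{o(n)}$. So I would assume, for contradiction, that there is some $h \in 2^{o(n)}$ such that every \Honeplus formula admits an equivalent \Hone formula of size at most $h$ applied to its own size, and then play this assumption off against the formulas $\varphi_n$ supplied by Corollary \ref{cor:deepmodels}.

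First I would fix the family $\varphi_n$ from Corollary \ref{cor:deepmodels}: these are finitely satisfiable \Honeplus formulas with $\size{\varphi_n} = \bigO(n)$ all of whose tree models have height at least $2^{2^{2^n}}$. Under the succinctness assumption each $\varphi_n$ has an equivalent \Hone formula $\psi_n$ with $\size{\psi_n} \le h(\size{\varphi_n})$. Since $\psi_n \equiv \varphi_n$ and $\varphi_n$ is finitely satisfiable, $\psi_n$ is a finitely satisfiable \Hone formula, so Theorem \ref{thm:shallowmodels} provides a finite model of $\psi_n$ of depth at most $2^{2^{\bigO(\size{\psi_n})}}$. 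But equivalence means this is simultaneously a model of $\varphi_n$, so by Corollary \ref{cor:deepmodels} its depth is at least $2^{2^{2^n}}$.

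Chaining the two bounds gives $2^{2^{2^n}} \le 2^{2^{\bigO(\size{\psi_n})}}$, and peeling off two exponentials yields $\size{\psi_n} = \Omega(2^n)$. On the other hand the succinctness assumption bounds $\size{\psi_n} \le h(\size{\varphi_n}) \le h(\bigO(n))$, and because $h \in 2^{o(n)}$ this is again $2^{o(n)}$. The inequality $\Omega(2^n) \le 2^{o(n)}$ fails for large $n$, so the succinctness assumption is untenable and the corollary follows.

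The argument is short because all the real work sits in the two cited results, and the points needing care are essentially bookkeeping. I would make explicit that ``height'' in Corollary \ref{cor:deepmodels} and ``depth'' in Theorem \ref{thm:shallowmodels} refer to the same quantity, the length of a longest root-to-leaf path, so that the two bounds may legitimately be compared, and I would note that equivalence transfers both finite satisfiability and the lower height bound from $\varphi_n$ to $\psi_n$, since equivalent formulas have exactly the same models. The one genuinely substantive step, which I expect to be the closest thing to an obstacle, is the asymptotic comparison: one must observe that composing $h \in 2^{o(n)}$ with the linear bound $\size{\varphi_n} = \bigO(n)$ keeps the result in $2^{o(n)}$, so that it cannot dominate the $\Omega(2^n)$ lower bound forced on $\size{\psi_n}$.
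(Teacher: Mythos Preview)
Your argument is correct and is exactly the one the paper intends: the paper's own ``proof'' is a single line stating that Corollary~\ref{cor:deepmodels} and Theorem~\ref{thm:shallowmodels} together immediately yield the result, and you have simply spelled out that implication. There is no alternative approach to compare.
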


\section{Conclusion}

The aim of this paper is to contribute to the understanding of
one-variable hybrid logics on trees, one of the extensions of temporal
logics with reasonable complexity. We showed that \Honeplus has no
additional power over \Hone but is exponentially more succinct, we settled the complexity of
\Honeplus and showed that hybrid variables do not help in expressing
fairness (as \HCTLplus cannot express $\E\G\F p$).

However, we leave a couple of issues for further study, including
the following.
\begin{itemize}
\item We conjecture that the succinctness gap between \Honeplus and
  \Hone is actually $\theta(n)!$.
\item We expect the \HCTL-game to capture exactly the
  expressive power of \HCTL. Remember that here we needed and showed 
  only one part of this equivalence.
\item The complexity of Model Checking for \HCTL has to be explored
  thoroughly, on trees and on arbitrary transition systems. In this
  context, two possible semantics should be explored: the one, where
  variables are bound to nodes of the computation tree and the one
  which binds nodes to states of the transition system (the latter
  semantics makes the satisfiability problem undecidable on arbitrary
  transition systems \cite{ArecesBM01})
\end{itemize}



\bibliographystyle{abbrv}
\bibliography{HBTL}

\ignore{
\input{appendix-expressivity}
\input{appendix-hctl+}
\input{appendix-compl-upper-bound}

\newpage
\appendix
}
\end{document}